\let\mathbb=\mathds
\DeclareMathOperator*{\argmax}{\arg\max}
\DeclareMathOperator*{\argmin}{\arg\min}
\DeclareMathOperator{\Tr}{Tr}
\DeclareMathOperator{\Var}{Var}
\newcommand{\be}{{\mathbf e}}
\def\0{{\mathbf{0}}}
\def\1{{\mathbf{1}}}
\def\2{{\mathbf{2}}}
\def\3{{\mathbf{3}}}
\def\4{{\mathbf{4}}}
\def\5{{\mathbf{5}}}
\def\6{{\mathbf{6}}}
\def\7{{\mathbf{7}}}
\def\8{{\mathbf{8}}}
\def\9{{\mathbf{9}}}
\def\be{\begin{equation}}
\def\ee{\end{equation}}
\def\bea{\begin{eqnarray}}
\def\eea{\end{eqnarray}}
\theoremstyle{plain}
\newtheorem{theo}{Theorem} 
\newtheorem{lemm}[theo]{Lemma} 
\theoremstyle{definition}
\newtheorem{defn}[theo]{Definition} 
\theoremstyle{remark}
\let\origmaketitle\maketitle
\def\maketitle{
	\begingroup
	\def\uppercasenonmath##1{} 
	\let\MakeUppercase\relax 
	\origmaketitle
	\endgroup
}
\begin{document}

\title{\bfseries \Large{Sphere-Packing Bound for Symmetric Classical-Quantum Channels}}

\author{ {Hao-Chung Cheng$^{1,2}$, Min-Hsiu Hsieh$^2$, and Marco  Tomamichel$^{2,3}$}}
\address{\small  	
	$^{1}$Graduate Institute Communication Engineering, National Taiwan University, Taiwan (R.O.C.)\\
	$^{2}$Centre for Quantum Software and Information,\\
	Faculty of Engineering and Information Technology, University of Technology Sydney, Australia
\\
	$^3$School  of  Physics,  The University  of  Sydney, Australia
	}
\email{\href{mailto:F99942118@ntu.edu.tw}{F99942118@ntu.edu.tw}}
\email{\href{mailto:Min-Hsiu.Hsieh@uts.edu.au}{Min-Hsiu.Hsieh@uts.edu.au}}
\email{\href{mailto:marcotom.ch@gmail.com}{marcotom.ch@gmail.com}}

%


\maketitle

\begin{abstract}
We provide a sphere-packing lower bound for the optimal error probability in finite blocklengths when coding over a symmetric classical-quantum channel. 
Our result shows that the pre-factor can be significantly improved from the order of the subexponential to the polynomial. This established pre-factor is essentially optimal because it matches the best known random coding upper bound in the classical case.
Our approaches rely on a sharp concentration inequality in strong large deviation theory and crucial properties of the error-exponent function.
\end{abstract}

\section{Introduction} \label{sec:introduction}

The probability of decoding error is one of the fundamental criteria for evaluating the performance of a communication system.
In Shannon's seminal work \cite{Sha48}, he pioneered the study of  the noisy coding theorem, which states that the error probability can be made arbitrarily small as the coding blocklength grows when the coding rate $R$ is below the channel capacity $C$.
Later, Shannon \cite{Sha59} made a further step in exploring the exponential dependency of the optimal error probability $\epsilon^*(n,R)$ on the blocklength $n$ and rate $R$, and defined the \emph{reliability function} as follows: given a fixed coding rate $R<C$,
$E(R) := \limsup_{n\to+\infty} \, -\frac1n \log \epsilon^*(n,R)$.
The quantity $E(R)$ then provides a measure of how rapidly the error probability approaches zero with an increase in blocklength. This asymptotic characterization of the optimal error probability under a fixed rate is hence called the \emph{error exponent analysis}.
For a classical channel, the upper bounds of the optimal error can be established using a random coding argument \cite{Gal68}. 
On the other hand, the lower bound was first developed by Shannon, Gallager, and Berlekamp \cite{SGB67} and was called the \emph{sphere-packing bound}.
Alternative approaches by Haroutunian \cite{Har68} and Blahut \cite{Bla74} were subsequently proposed. 

In recent years, much attention has been paid to the finite blocklength regime \cite{PPV10,TBR16}. Altu\u{g} and Wagner employed strong large deviation techniques \cite{BR60} to prove a sphere-packing bound with a finite blocklength $n$. Moreover, the pre-factor of the bound was significantly refined from the order of the subexponential $\exp\{ -O(\sqrt{n})  \}$ \cite{SGB67} to the polynomial \cite{AW11,AW14}.
This refinement is substantial especially at rates near capacity, where the error-exponent function is close to zero; hence, the pre-factor dominants the bound \cite{AW10, AW14b}.

Error exponent analysis in classical-quantum (c-q) channels is much more difficult because of the noncommutative nature of quantum mechanics. Burnashev and Holevo \cite{BH98, Hol00} investigated reliability functions in c-q channels and proved the random coding upper bound for pure-state channels.
Winter \cite{Win99} adopted Haroutunian's method to derive a sphere-packing bound for c-q channels in the form of relative entropy functions \cite{Har68}.  
Dalai \cite{Dal13} employed Shannon-Gallager-Berlekamp's approach to establish a sphere-packing bound with Gallager's expression \cite{SGB67}.
It was later pointed out that these two sphere-packing exponents are not equal for general c-q channels \cite{DW14}.
In this work, we initiate the study of the refined sphere-packing bound in the quantum scenario.
In particular, we consider a ``symmetric c-q channel" (see Section~\ref{sec:notation} for a detailed definition), which is an important class of \emph{covariant channels} (e.g.~\cite{Hol93}), and establish a sphere-packing bound with the pre-factor improved
from the order of the subexponential in Dalai's result \cite{Dal13} to the polynomial.
Our result recovers Altu\u{g} and Wagner's work \cite{AW11} for classical symmetric channels including the binary symmetric channel and binary erasure channel.
Furthermore, the proved pre-factor matches that of the best known random coding upper bound \cite{Hon15} in the classical case.
Hence, our result yields the exact asymptotics for the sphere-packing bound in symmetric c-q channels.
The main ingredients in our proof are a tight concentration inequality from Bahadur and Ranga Rao \cite{BR60}, \cite{AW14b} (see Appendix \ref{app:tight}) and the major properties of the sphere-packing exponent \cite{HM16}.
We remark that the result obtained in this paper might enable analysis in the medium error probability regime of a classical-quantum channel \cite{AW10,AW14b,CCT+16a}. 
We leave the case for general c-q channels as future work \cite{CHT16b}.

This paper is organized as follows. We introduce the necessary notation and state our main result in Section \ref{sec:notation}.
Section \ref{sec:Properties} includes the crucial properties of the error-exponent function. We provide the proof of the main result in Section \ref{sec:proof}. Section \ref{sec:conclusions} concludes this paper.

\section{Notation and Main Result} \label{sec:notation}

\subsection{Notation} \label{ssec:notation}

Throughout this paper, we consider a finite-dimensional Hilbert space $\mathcal{H}$. 
The set of density operators (i.e.~positive semi-definite operators with unit trace) on $\mathcal{H}$ are defined as $\mathcal{S(H)}$.
For $\rho,\sigma\in\mathcal{S(H)}$, we write $\rho\ll \sigma$ if $\texttt{supp}(\rho) \subset \texttt{supp}(\sigma)$, where $\texttt{supp}(\rho)$ denotes the support of $\rho$. The identity operator on $\mathcal{H}$ is denoted by $\mathds{1}_\mathcal{H}$. When there is no possibility of confusion, we skip the subscript $\mathcal{H}$.  We use $\Tr\left[\,\cdot\, \right]$ as the trace function. 
Let $\mathbb{N}$, $\mathbb{R}$, and $\mathbb{R}_{> 0}$ denote the set of integers, real numbers,  and positive real numbers,, respectively.
Define $[n] := \{1,2,\ldots, n\}$ for $n\in\mathbb{N}$.
Given a pair of positive semi-definite operators $\rho,\sigma\in\mathcal{S(H)}$, we define the {(quantum) relative entropy}   as
$\mathbb{D}(\rho\|\sigma) :=  \Tr \left[ \rho \left( {\log} \rho - {\log} \sigma \right) \right]$,
when $\rho\ll\sigma$, and $+\infty$ otherwise.
For every $\alpha\in [0,1)$, we define the (Petz) quantum R\'enyi divergences 
$D_\alpha(\rho\|\sigma) := \frac{1}{\alpha-1} \log \Tr \left[ \rho^\alpha \sigma^{1-\alpha} \right]$.
For $\alpha = 1$, $D_1(\rho\|\sigma) := \lim_{\alpha\to 1} D_\alpha(\rho\|\sigma) = \mathbb{D}(\rho\|\sigma)$.
Let $\mathcal{X} = \{1,2,\ldots, |\mathcal{X}| \}$ be a finite alphabet, and 
let $\mathscr{P}(\mathcal{X})$ be the set of probability distributions on $\mathcal{X}$. 
In particular, we denote by $U_\mathcal{X}$ the uniform distribution on $\mathcal{X}$.
A classical-quantum (c-q) channel $W$ maps elements of the finite set $\mathcal{X}$ to the density operators in $\mathcal{S}(\mathcal{H})$, i.e.,~$W:\mathcal{X}\to\mathcal{S}(\mathcal{H})$.
Let $\mathcal{M}$ be a finite alphabetical set with size $M=|\mathcal{M}|$. An ($n$-block) \emph{encoder} is a map $f_n:\mathcal{M}\to \mathcal{X}^n$ that encodes each message $m\in\mathcal{M}$ to a codeword $\mathbf{x}^n(m) :=   x_1(m) \ldots x_n(m) \in\mathcal{X}^n$.
The codeword $\mathbf{x}^n(m)$ is then mapped to a state
$W_{\mathbf{x}^n(m)}^{\otimes n} = W_{x_1(m)} \otimes \cdots \otimes W_{x_n(m)} \in \mathcal{S}(\mathcal{H}^{\otimes n})$.
The \emph{decoder} is described by a positive operator-valued measurement (POVM) $\Pi_n = \{\Pi_{n,1},\ldots, \Pi_{n,M} \}$ on $\mathcal{H}^{\otimes n}$, where $\Pi_{n,i} \geq 0$ and $\sum_{i=1}^{M} \Pi_{n,i} = \mathds{1}$. The pair $(f_n, \Pi_n) =: \mathcal{C}_n$ is called a \emph{code} with \emph{rate} $R = \frac1n \log |\mathcal{M}|$.  The error probability of  sending a message $m$ with the code $ \mathcal{C}_n$ is $\epsilon_m(W,\mathcal{C}_n) :=  1- \Tr\left(\Pi_{n,m} W_{\mathbf{x}^n(m)}\right)$. We use $\epsilon_\text{max}(W,\mathcal{C}_n) = \max_{m\in\mathcal{M}} \epsilon_m(W,\mathcal{C}_n) $ and $\bar{\epsilon}(W,\mathcal{C}_n) = \frac1M \sum_{m\in\mathcal{M}} \epsilon_m(W,\mathcal{C}_n)$ to denote the \emph{maximal} error probability and the \emph{average} error probability, respectively. 
Given a sequence ${\mathbf{x}}^n \in \mathcal{X}^n$, we denote by 
$P_{\mathbf{x}^n} (x) := \frac1n \sum_{i=1}^n \mathbf{1}\left\{ x = x_i \right\}$
the empirical distribution of $\mathbf{x}^n$.

Throughout this paper, we consider a \emph{symmetric c-q} channel defined as
\begin{align} \label{eq:sym}
W_x := V^{x-1} W_1 (V^{\dagger})^{{x-1}}, \quad \forall x\in\mathcal{X},
\end{align}
where $W_1\in\mathcal{S(H)}$ is an arbitrary density operator, and $V$ 
satisfies $V^{\dagger} V = V V^{\dagger} = V^{|\mathcal{X}|} = \mathds{1}_\mathcal{H}$.
We define the following conditional entropic quantities for the channel $W$ with $P\in\mathscr{P}(\mathcal{X})$:
${D}_\alpha \left( W \| \sigma | P \right) :=  \sum_{x\in\mathcal{X}} P(x) {D}_\alpha\left( W_x \| \sigma \right)$.
The \emph{mutual information} of the c-q channel $W: \mathcal{X}\to \mathcal{S(H)}$ with prior distribution $P\in\mathscr{P}(\mathcal{X})$ is defined as 
$I(P,W) := \mathbb{D} \left( W \| PW | P \right)$,
where $PW^\alpha := \sum_{x\in\mathcal{X}} P(x) W_x^\alpha $, $\alpha\in(0,1]$.
The (classical) \emph{capacity} of the channel $W: \mathcal{X}\to \mathcal{S(H)}$ is denoted by
$C := \max_{P\in\mathscr{P}(\mathcal{X})} I(P,W)$.
Let
\begin{align}
E_\text{sp}^{(1)} (R,P) &:=  \sup_{s\geq 0} \left\{ E_0(s,P) - sR  \right\} \notag \\
E_\text{sp}^{(2)} (R,P) &:= \sup_{0<\alpha\leq 1} \min_{\sigma\in\mathcal{S(H)}} 
\frac{\alpha-1}{\alpha}   \left( R - D_\alpha \left( W \|  \sigma | P \right) \right), \notag
\end{align}
where we denote by 
$E_0(s,P) :=  -\log \Tr \left[
\left(  P W^{1/(1+s)}\right)^{1+s}
\right]$
an {auxiliary function} \cite{Hol00, HM16}.
The \emph{sphere-packing exponent} is defined  by
\begin{align}
E_\text{sp}(R) := \max_{P\in\mathscr{P}(\mathcal{X})} E_\text{sp}^{(1)} (R,P)
= \max_{P\in\mathscr{P}(\mathcal{X})} E_\text{sp}^{(2)} (R,P),
\end{align}
where the last equality follows from \cite[Proposition IV.2]{MO14b}.
Further, we define a rate \cite[p.~152]{CK11}, \cite{Dal13}:
\begin{align} \label{eq:R_inf1}
R_\infty &:= \lim_{s\to+\infty} \max_{P\in\mathscr{P}(\mathcal{X})} \min_{\sigma\in\mathcal{S(H)}} \sum_{x\in\mathcal{X}} P(x) D_{\frac{1}{1+s}}\left(W_x\|\sigma \right) \notag \\
&= \max_{P\in\mathscr{P}(\mathcal{X})} \min_{\sigma\in\mathcal{S(H)}} \sum_{x\in\mathcal{X}} P(x) \Tr\left[ W_x^0 \sigma\right].
\end{align}
It follows that $E_\text{sp}(R) = +\infty$ for any $R\leq R_\infty$ 
(see also \cite[p.~69]{SGB67} and \cite[Eq.~(5.8.5)]{Gal68}).

Consider a binary hypothesis whose null and alternative hypotheses are $\rho\in\mathcal{S(H)}$ and $\sigma\in\mathcal{S(H)}$, respectively. The \emph{type-I error} and \emph{type-II error} of the hypothesis testing, for an operator $0\leq Q\leq \mathds{1}$, are defined as
$\alpha\left(Q;\rho\right):= \Tr\left[ (\mathds{1}-Q) \rho \right]$, and 
$\beta\left(Q;\sigma\right):= \Tr\left[ Q \sigma \right]$.
There is a trade-off between these two errors. Thus, we can define the minimum type-I error, when the type-II error is below $\mu\in(0,1)$, as
\begin{align} \label{eq:alpha}
\widehat{\alpha}_{\mu}\left(\rho\|\sigma\right)
:= \min_{0\leq Q\leq \mathds{1} } \big\{ \alpha\left(Q;\rho\right) : \beta\left(Q;\sigma\right) \leq \mu  \big\}.
\end{align}

\subsection{Main Result} \label{ssec:main}

Let us now consider any symmetric c-q channel with capacity $C$.

\begin{theo}[Exact Sphere-Packing Bound] \label{theo:refined}
	For any rate $R\in[0,C)$, there exist an $N_0\in\mathbb{N}$ such that for all codes $\mathcal{C}_n$ of length $n\geq N_0$, we have
	\begin{align} \label{eq:goal}
	{\epsilon}_{\max} \left(\mathcal{C}_n \right) \geq  \frac{1-o(1)}{ n^{\frac12 \left( 1+\left| E_\textnormal{sp}'(R)\right| \right) } }   \exp\left\{
	-n E_\textnormal{sp}(R)
	\right\},	
	\end{align}
	where $E_\textnormal{sp}'(R) := \partial \max_{P\in\mathscr{P}(\mathcal{X})} E_\textnormal{sp}^{(1)}(r,P)/ \partial r |_{r=R}$.
\end{theo}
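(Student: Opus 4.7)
The plan is to adapt the Shannon--Gallager--Berlekamp sphere-packing argument to the symmetric c-q setting, replacing the sub-exponential pre-factor of \cite{Dal13} by a sharp polynomial one via the strong large-deviation estimate of Bahadur and Ranga Rao in Appendix~\ref{app:tight}. The first step will be a hypothesis-testing meta-converse: for any auxiliary state $\sigma\in\mathcal{S}(\mathcal{H})$, POVM completeness $\sum_m \Pi_{n,m}=\mathds{1}$ forces $\sum_m \Tr[\Pi_{n,m}\sigma^{\otimes n}]=1$, so some message $m^\star$ must satisfy $\Tr[\Pi_{n,m^\star}\sigma^{\otimes n}]\le M^{-1}=\mathrm{e}^{-nR}$, and definition \eqref{eq:alpha} then delivers
\[
\epsilon_{\max}(\mathcal{C}_n)\;\ge\;\epsilon_{m^\star}(\mathcal{C}_n)\;\ge\;\widehat\alpha_{\mathrm{e}^{-nR}}\bigl(W_{\mathbf{x}^n(m^\star)}^{\otimes n}\,\big\|\,\sigma^{\otimes n}\bigr).
\]
I would take $\sigma$ to be a minimizer of the inner optimization in $E_\text{sp}^{(2)}(R,U_\mathcal{X})$; this is the right target because channel symmetry forces the uniform prior to be optimal in the outer maximization, so $E_\text{sp}^{(2)}(R,U_\mathcal{X})=E_\text{sp}(R)$.

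Next I would exploit the covariance \eqref{eq:sym} in two ways. First, by averaging any such minimizer over the cyclic group $\{V^k\}$ and using joint convexity of $D_\alpha(W_x\|\cdot)$ in the second argument, one may assume $V\sigma V^\dagger=\sigma$. Second, writing $\mathbf{U}:=\bigotimes_{i=1}^n V^{x_i(m^\star)-1}$, relation \eqref{eq:sym} gives $W_{\mathbf{x}^n(m^\star)}^{\otimes n}=\mathbf{U}\,W_1^{\otimes n}\,\mathbf{U}^\dagger$ and $\mathbf{U}\,\sigma^{\otimes n}\mathbf{U}^\dagger=\sigma^{\otimes n}$. Unitary invariance of $\widehat\alpha$ then collapses the previous bound to
\[
\epsilon_{\max}(\mathcal{C}_n)\;\ge\;\widehat\alpha_{\mathrm{e}^{-nR}}\bigl(W_1^{\otimes n}\,\big\|\,\sigma^{\otimes n}\bigr),
\]
which is code-independent.

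The remaining step, and the heart of the proof, will be a sharp asymptotic lower bound
\[
\widehat\alpha_{\mathrm{e}^{-nR}}\bigl(W_1^{\otimes n}\,\big\|\,\sigma^{\otimes n}\bigr)\;\ge\;\frac{1-o(1)}{n^{(1+|E'_\text{sp}(R)|)/2}}\,\mathrm{e}^{-nE_\text{sp}(R)}.
\]
My plan here is to apply the Nussbaum--Szkoła construction to the pair $(W_1,\sigma)$, producing classical distributions $(p,q)$ on a finite alphabet preserving every Petz R\'enyi divergence, and then to convert $\widehat\alpha$ into a classical tail probability for an i.i.d.\ log-likelihood sum using a Neyman--Pearson test aligned with the spectral decomposition of $W_1^{\otimes n}-\lambda\sigma^{\otimes n}$. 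Tilting to the saddle-point parameter $s^\star$ determined by the Legendre relation $-E'_\text{sp}(R)=s^\star$ (whose existence and uniqueness will follow from the properties of $E_\text{sp}$ collected in Section~\ref{sec:Properties}) and invoking the Bahadur--Ranga Rao inequality will yield the exponent $\sup_{s\ge 0}\{E_0(s,U_\mathcal{X})-sR\}=E_\text{sp}(R)$ together with a polynomial pre-factor.

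The hard part will be the precise polynomial exponent $(1+|E'_\text{sp}(R)|)/2$: a direct Bahadur--Ranga Rao estimate gives only the standard $n^{-1/2}$ factor, whereas the additional $n^{-|E'_\text{sp}(R)|/2}$ arises because the threshold $\mu=\mathrm{e}^{-nR}$ is itself exponentially small, pushing the analysis into a two-sided saddle-point regime at $s^\star=|E'_\text{sp}(R)|$. Carrying out this two-dimensional saddle-point evaluation cleanly, and verifying that the quantum-to-classical reduction via Nussbaum--Szkoła preserves the pre-factor (and not merely the exponent), are where the differentiability, strict convexity, and explicit derivative formula for $E_\text{sp}$ recorded in Section~\ref{sec:Properties} will be essential.
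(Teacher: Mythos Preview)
Your proposal captures the essential architecture correctly: the hypothesis-testing meta-converse, the Nussbaum--Szko{\l}a reduction, and the Bahadur--Ranga Rao estimate are exactly the tools the paper uses. Your symmetry reduction to $\widehat\alpha_{e^{-nR}}(W_1^{\otimes n}\|\sigma^{\otimes n})$ is in fact slightly cleaner than what the paper does---the paper keeps the general codeword $\mathbf{x}^n$ and later invokes an invariance argument (Lemma~\ref{lemm:invariance}) to show the resulting exponent does not depend on the empirical distribution $P_{\mathbf{x}^n}$; your unitary-covariance argument achieves the same effect more directly.

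There is, however, a genuine gap in the mechanism you sketch for the extra factor $n^{-|E'_{\mathrm{sp}}(R)|/2}$. A ``two-dimensional saddle-point evaluation'' is not what delivers it, and I do not see how to extract the correct pre-factor from a saddle-point analysis at the original threshold $e^{-nR}$. The concrete obstacle is this: combining Nagaoka's inequality $\alpha(Q_n)+\delta\,\beta(Q_n)\ge \tfrac12[\alpha(\mathscr U;p^n)+\delta\,\beta(\mathscr U;q^n)]$ with Bahadur--Ranga Rao at rate $R$ makes the right-hand side of order $n^{-1/2}e^{-nE_{\mathrm{sp}}(R)}$, while the subtracted term $\delta\,\beta(Q_n)$ (under the constraint $\beta(Q_n)\le e^{-nR}$) is of order $e^{-nE_{\mathrm{sp}}(R)}$---larger than the main term, so nothing survives. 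The paper's remedy is a \emph{rate back-off}: one works instead at $R_n=R-\gamma_n$ with $\gamma_n=(\tfrac12+\gamma)\tfrac{\log n}{n}$, which forces $\delta\,\beta(Q_n)\le n^{-1/2-\gamma}e^{-nE_{\mathrm{sp}}(R_n)}$, now safely negligible against the $n^{-1/2}$ main term. The price is that the exponent becomes $E_{\mathrm{sp}}(R_n)$ rather than $E_{\mathrm{sp}}(R)$; Taylor expanding $E_{\mathrm{sp}}(R-\gamma_n)=E_{\mathrm{sp}}(R)+\gamma_n|E'_{\mathrm{sp}}(R)|+O(\gamma_n^2)$ then converts the back-off into exactly the additional $n^{-|E'_{\mathrm{sp}}(R)|/2}$. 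This back-off--then--Taylor step is the missing ingredient in your plan; once inserted, the rest goes through.

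A smaller point: the quantum-to-classical passage requires the Nagaoka/Audenaert inequality explicitly, not merely a quantum Neyman--Pearson projector, because the Nussbaum--Szko{\l}a distributions are not the outcome statistics of any single measurement on $(W_1,\sigma)$. The constant factor $\tfrac12$ this costs is harmless for the polynomial pre-factor.
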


\section{Properties of the Sphere-Packing Exponent} \label{sec:Properties}


\begin{lemm}
	[Optimal Input Distribution] \label{lemm:input}
	For any $R> R_\infty$, the distribution $U_{\mathcal{X}}$ is a maximizer of $E_\textnormal{sp}^{(1)}(R,\cdot)$ and $E^{(2)}_\textnormal{sp}(R,\cdot)$.
\end{lemm}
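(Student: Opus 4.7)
The plan is a concavity + symmetry + averaging argument, applied separately to the parametric pieces of each exponent. Introducing $I_\alpha(P) := \min_{\sigma \in \mathcal{S(H)}} D_\alpha(W\|\sigma|P)$, the sign $(\alpha-1)/\alpha \leq 0$ on $\alpha\in(0,1]$ lets one swap the inner $\min$ with the prefactor to get
\begin{align*}
E_{\text{sp}}^{(2)}(R,P) = \sup_{\alpha\in(0,1]} \frac{\alpha-1}{\alpha}\bigl(R - I_\alpha(P)\bigr),
\end{align*}
while $E_{\text{sp}}^{(1)}(R,P) = \sup_{s\geq 0}\{E_0(s,P) - sR\}$ by definition. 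It will suffice to show, for each fixed $\alpha$ and $s$, that $I_\alpha$ and $E_0(s,\cdot)$ are maximized on $\mathscr{P}(\mathcal{X})$ at $P = U_\mathcal{X}$; taking the outer supremum in $\alpha$ or $s$ (after multiplying by the nonpositive $(\alpha-1)/\alpha$ in the first case) preserves the inequality.

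Two properties are needed. \emph{Concavity in $P$}: $I_\alpha$ is a pointwise minimum over $\sigma$ of functions $P \mapsto \sum_x P(x) D_\alpha(W_x\|\sigma)$ that are affine in $P$, hence concave; concavity of $E_0(s,\cdot)$ for $s \geq 0$ is the quantum extension of Gallager's Theorem 5.6.3 due to Holevo~\cite{Hol00}. \emph{Cyclic-shift invariance}: writing $\pi_k: x \mapsto x\oplus k$ (mod $|\mathcal{X}|$), the assumption $V^{|\mathcal{X}|}=\mathds{1}$ combined with \eqref{eq:sym} yields $W_{\pi_k(x)} = V^k W_x V^{-k}$, so
\begin{align*}
\sum_x P(\pi_k^{-1}(x))\,W_x^{1/(1+s)} = V^k\Bigl(\sum_y P(y) W_y^{1/(1+s)}\Bigr)V^{-k},
\end{align*}
and unitary invariance of $\Tr[(\cdot)^{1+s}]$ gives $E_0(s, P\circ\pi_k^{-1}) = E_0(s,P)$; likewise $D_\alpha(V^k \rho V^{-k}\|\sigma) = D_\alpha(\rho\|V^{-k}\sigma V^k)$, together with the relabelling $\sigma \mapsto V^{-k}\sigma V^k$ in the minimization, yields $I_\alpha(P\circ\pi_k^{-1}) = I_\alpha(P)$.

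Setting $\bar P := \tfrac{1}{|\mathcal{X}|}\sum_{k=0}^{|\mathcal{X}|-1} P\circ\pi_k^{-1}$, transitivity of the cyclic action forces $\bar P = U_\mathcal{X}$, and combining concavity with shift invariance gives $I_\alpha(U_\mathcal{X}) \geq I_\alpha(P)$ and $E_0(s, U_\mathcal{X}) \geq E_0(s, P)$ for every $P$. Propagating these through the outer supremum in $\alpha$ (resp.~$s$) yields $E_{\text{sp}}^{(2)}(R,U_\mathcal{X}) \geq E_{\text{sp}}^{(2)}(R,P)$ and $E_{\text{sp}}^{(1)}(R,U_\mathcal{X}) \geq E_{\text{sp}}^{(1)}(R,P)$, which is the claim.

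The main obstacle is invoking the concavity of $P \mapsto E_0(s,P)$ in the quantum setting, which is the non-elementary ingredient (concavity of $I_\alpha$ and the symmetry steps are straightforward). A fallback that avoids it is to run the argument only for $E_{\text{sp}}^{(2)}$, then note that at $P = U_\mathcal{X}$ the optimal $\sigma^\ast$ in the definition of $I_\alpha$ can be taken to be $V$-invariant (again by averaging at the $\sigma$-level), so Jensen becomes an equality in the Petz minimization and $E_{\text{sp}}^{(1)}(R,U_\mathcal{X}) = E_{\text{sp}}^{(2)}(R,U_\mathcal{X})$; combined with $\max_{P} E_{\text{sp}}^{(1)}(R,P) = \max_{P} E_{\text{sp}}^{(2)}(R,P)$ from the excerpt, uniform-optimality for $E_{\text{sp}}^{(1)}$ follows. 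The hypothesis $R>R_\infty$ plays no role in the averaging itself and only serves to ensure $E_{\text{sp}}(R)<\infty$ and the suprema are attained, making the maximizer claim non-vacuous.
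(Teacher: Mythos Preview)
Your argument is correct and takes a genuinely different route from the paper's proof. The paper proceeds by verifying Holevo's first-order optimality condition \cite[Eq.~(38)]{Hol00} for the maximizer of $E_0(s,\cdot)$: it computes $\Tr[W_x^{\alpha}(U_{\mathcal{X}}W^{\alpha})^{(1-\alpha)/\alpha}]$ directly, shows this quantity is independent of $x$ using $V$-invariance of $U_{\mathcal{X}}W^{\alpha}$, and concludes that $U_{\mathcal{X}}$ satisfies the KKT-type equality case for every $s\ge 0$; the $E_{\text{sp}}^{(2)}$ claim then follows from a one-line Jensen bound $E_{\text{sp}}^{(2)}(R,U_{\mathcal{X}})\ge E_{\text{sp}}^{(1)}(R,U_{\mathcal{X}})=E_{\text{sp}}(R)$. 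You instead use concavity in $P$ together with cyclic-shift invariance and averaging, which treats $E_{\text{sp}}^{(1)}$ and $E_{\text{sp}}^{(2)}$ in parallel. The advantage of your approach is that the $E_{\text{sp}}^{(2)}$ case becomes completely elementary (concavity of $I_\alpha$ is just ``infimum of affine''), and your fallback---proving $E_{\text{sp}}^{(1)}(R,U_{\mathcal{X}})=E_{\text{sp}}^{(2)}(R,U_{\mathcal{X}})$ by taking a $V$-invariant optimal $\sigma$ so that all $\Tr[W_x^{\alpha}\sigma^{1-\alpha}]$ coincide---neatly avoids invoking concavity of $E_0(s,\cdot)$ altogether and is in spirit the mirror image of the paper's Jensen step. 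The paper's route, on the other hand, gives slightly more: it identifies the optimal $\sigma$ explicitly (cf.~Lemma~\ref{lemm:opt}) as a by-product of checking the optimality condition. Both rely on a result from \cite{Hol00}, just different ones (optimality criterion vs.\ concavity), and both correctly observe that $R>R_\infty$ is only needed to keep the exponents finite.
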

\begin{proof}
	We first prove that $U_\mathcal{X}$ attains $\max_{P\in\mathscr{P}(\mathcal{X})} E_0(s,P)$.
	From Eq.~\eqref{eq:sym}, it is not hard to verify that $U_{\mathcal{X}} W^{\alpha} = V U_{\mathcal{X}} W^\alpha V^\dagger$ for all $\alpha
	\in (0,1]$.
	Hence,  it follows that 
\begin{align}
\Tr[ W_x^{\alpha}  ( U_{\mathcal{X}} W^{\alpha} )^{\frac{1-\alpha}{\alpha}} ]
&= \Tr[ V^{x-1} W_1^{\alpha} V^{\dagger\, x-1}  ( U_{\mathcal{X}} W^{\alpha} )^{\frac{1-\alpha}{\alpha}} ] \\
&= \Tr[ W_1^{\alpha} V^{\dagger\, x-1}  ( U_{\mathcal{X}} W^{\alpha} )^{\frac{1-\alpha}{\alpha}}  V^{x-1}] \\
&= \Tr[  W_1^{\alpha}  ( U_{\mathcal{X}} W^{\alpha} )^{\frac{1-\alpha}{\alpha}} ] \\
&=
\Tr[ ( U_{\mathcal{X}} W^{\alpha} )^{\frac{1}{\alpha}}]
\end{align} 
	for all $\alpha\in (0,1]$.
	The above equation shows that
	the distribution $U_\mathcal{X}$ that maximizes $E_0(s,P)$, $\forall s\geq 0$ \cite[Eq.~(38)]{Hol00}.
	Then we have
	\begin{align*}
	E_\text{sp}^{(1)}(R, U_\mathcal{X}) = \sup_{s\geq 0} \left\{ \max_{P\in\mathscr{P}(\mathcal{X})} E_0(s,P) -sR  \right\} = E_\text{sp}(R).
	\end{align*}
	Further, Jensen's inequality implies that $E_\text{sp}^{(2)} (R,U_\mathcal{X}) \geq E_\text{sp}^{(1)} (R,U_\mathcal{X}) = E_\text{sp}(R)$, which completes the proof.
\end{proof}

\begin{lemm}[Saddle-Point Property] \label{lemm:saddle}
	Consider any $R\in(R_\infty, C   )$ and $P\in \mathscr{P}(\mathcal{X})$.
	Let
	$\mathcal{S}_{P,W}(\mathcal{H}) := \left\{ \sigma \in \mathcal{S(H)}: \forall x \in \textnormal{\texttt{supp}}(P), \, \textnormal{\texttt{supp}}(W_x) \cap \textnormal{\texttt{supp}} (\sigma)  \neq \emptyset \right\}$. We define
	\begin{align} \label{eq:F}
	F_{R,P} (\alpha, \sigma) := \frac{\alpha-1}{\alpha} \left( R -  D_\alpha\left( W \| \sigma | P  \right)  \right), 
	\end{align}
	on $(0,1]\times \mathcal{S}_{P,W}(\mathcal{H})$, and let
	$	\mathscr{P}_R(\mathcal{X}) := \left\{ P\in\mathscr{P}(X) : \min_{\sigma \in \mathcal{S(H)}} \sup_{0<\alpha\leq 1} F_{R,P} (\alpha, \sigma) \in \mathbb{R}_{>0}    \right\}$.
	The following holds
	\begin{itemize}
		\item[(i)] For any $P\in\mathscr{P}(\mathcal{X})$, $F_{R,P}(\cdot,\cdot)$ has a saddle-point with  the saddle-value:
				\begin{align}
		\min_{\sigma \in \mathcal{S(H)}}\sup_{0<\alpha\leq 1}   F_{R,P} (\alpha,\sigma) 
		= \sup_{0<\alpha\leq 1} \min_{\sigma \in \mathcal{S(H)}}  F_{R,P} (\alpha,\sigma) 
		= E_\textnormal{sp}^{(2)}(R,W,P).
				\end{align}
		
		\item[(ii)] The saddle-point is unique for $P\in\mathscr{P}_R(\mathcal{X})$.
		
		\item[(iii)] Let $P\in\mathscr{P}_R(\mathcal{X})$. The unique saddle-point $(\alpha, \sigma)$ of $F_{R,P}(\cdot,\cdot)$ satisfies $\alpha\in(0,1)$ and 
		\begin{align}
		\sigma = \frac{\left( \sum_{x\in\mathcal{X}} P(x) W_x^{\alpha} \mathrm{e}^{(1-\alpha)D_{\alpha}\left( W_x\| \sigma \right)} \right)^{1/\alpha}}{\Tr\left[ \left( \sum_{x\in\mathcal{X}} P(x) W_x^{\alpha} \mathrm{e}^{(1-\alpha)D_{\alpha}\left( W_x\| \sigma \right)} \right)^{1/\alpha}\right] }
		\gg W_x, \quad \forall x \in \textnormal{\texttt{supp}}(P).
		\end{align}
	\end{itemize}
	
\end{lemm}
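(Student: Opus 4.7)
The plan is to treat parts (i), (ii), (iii) in sequence, with most of the technical effort concentrated in (i), through a minimax theorem, and in (iii), through a Lagrangian stationarity computation.

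For (i), the approach is to apply Sion's minimax theorem to $F_{R,P}$ on the product domain $(0,1]\times \mathcal{S}_{P,W}(\mathcal{H})$. Convexity of $F_{R,P}(\alpha,\cdot)$ in $\sigma$ for each $\alpha\in(0,1)$ follows from joint convexity of the Petz R\'enyi divergence, combined with the negative scalar factor $(\alpha-1)/\alpha$; the case $\alpha=1$ is trivial since $F_{R,P}(1,\sigma)\equiv 0$. For concavity (or at least quasi-concavity) of $F_{R,P}(\cdot,\sigma)$ in $\alpha$, one can either appeal to the analysis of $\alpha\mapsto\log\Tr[W_x^\alpha\sigma^{1-\alpha}]$ in the Hayashi--Mosonyi framework, or reduce via the substitution $s=(1-\alpha)/\alpha$ to the well-known concavity of the $E_0$-type quantity $E_0(s,P)-sR$ in $s$, which transfers to quasi-concavity in $\alpha$. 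Since $\mathcal{S}(\mathcal{H})$ is compact and $F_{R,P}$ is continuous on the interior of its domain with the appropriate semi-continuous extensions at the boundary, Sion's theorem delivers
\[
\min_\sigma\sup_\alpha F_{R,P}(\alpha,\sigma)=\sup_\alpha\min_\sigma F_{R,P}(\alpha,\sigma),
\]
with the right-hand side equal to $E_{\textnormal{sp}}^{(2)}(R,W,P)$ by definition, and actual attainment of a saddle-point follows by the standard compactness plus semi-continuity argument.

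For (ii), under the hypothesis $P\in\mathscr{P}_R(\mathcal{X})$ the saddle-value is strictly positive and finite, so the optimisation is non-degenerate. Uniqueness of $\sigma^\ast$ follows from strict convexity of $D_\alpha(W_x\|\cdot)$ on $\mathcal{S}_{P,W}(\mathcal{H})$, which uses the support condition encoded in the definition of $\mathscr{P}_R(\mathcal{X})$. Uniqueness of $\alpha^\ast$ follows from strict concavity of $\alpha\mapsto\min_\sigma F_{R,P}(\alpha,\sigma)=E_0\bigl(\tfrac{1-\alpha}{\alpha},P\bigr)-\tfrac{1-\alpha}{\alpha}R$, where the strict inequality comes from the strict convexity of the Petz log-moment $s\mapsto E_0(s,P)$ in $s$ away from degenerate configurations, ruling out a flat plateau at the optimum.

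For (iii), the boundary cases $\alpha=0,1$ are excluded first. At $\alpha=1$, $F_{R,P}(1,\sigma)=0$, strictly less than the positive saddle-value when $P\in\mathscr{P}_R(\mathcal{X})$, so $\alpha^\ast\neq 1$; a parallel argument using the definition of $R_\infty$ and the hypothesis $R>R_\infty$ rules out $\alpha^\ast\to 0^+$, giving $\alpha^\ast\in(0,1)$. To derive the explicit fixed-point form of $\sigma^\ast$, one writes the Lagrangian $\mathcal{L}(\sigma,\lambda)=F_{R,P}(\alpha^\ast,\sigma)-\lambda(\Tr[\sigma]-1)$ and sets its Fr\'echet derivative in $\sigma$ to zero. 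Using the operator-derivative formula for $\sigma\mapsto\Tr[W_x^{\alpha^\ast}\sigma^{1-\alpha^\ast}]$ (a Daleckii--Krein double-operator-integral expression) together with the identity $\Tr[W_x^\alpha\sigma^{1-\alpha}]=\exp\{(\alpha-1)D_\alpha(W_x\|\sigma)\}$, the stationarity condition rearranges, after isolating $\sigma^\ast$ and imposing $\Tr[\sigma^\ast]=1$, to the displayed formula. The support relation $\sigma^\ast\gg W_x$ for $x\in\texttt{supp}(P)$ is then immediate: the positive operator inside the $1/\alpha^\ast$-power has support containing $\bigcup_{x\in\texttt{supp}(P)}\texttt{supp}(W_x)$, and raising to a positive power preserves support.

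The principal obstacle is the Fr\'echet-derivative calculation in (iii): the map $\sigma\mapsto\sigma^{1-\alpha^\ast}$ is non-trivial when $\sigma$ does not commute with $W_x$, so a clean implementation should either differentiate in the spectral basis of $\sigma^\ast$, where the derivative reduces to a divided-difference expression, or invoke the Daleckii--Krein formula directly and simplify using the cyclicity of the trace. A secondary difficulty is the rigorous verification of the quasi-concavity in $\alpha$ needed for Sion's theorem in (i); the cleanest route is to import this from the Hayashi--Mosonyi results already cited in the excerpt rather than re-derive it from scratch.
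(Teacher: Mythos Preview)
Your proposal is essentially correct and follows the same overall architecture as the paper's proof: a minimax theorem for (i), strict convexity/concavity arguments for (ii), and a Fr\'echet-derivative stationarity computation for (iii) (the paper cites Hayashi--Tomamichel \cite{HT14} for the latter, which is exactly your Lagrangian/Daleckii--Krein calculation).

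The main point of divergence is the choice of minimax machinery in (i): you invoke Sion's theorem on $(0,1]\times\mathcal{S}(\mathcal{H})$, whereas the paper reparameterises to $s=(1-\alpha)/\alpha\in[0,\infty)$ and applies Rockafellar's closed-saddle-element theorem. Sion delivers the minimax equality cheaply but not, by itself, attainment in the non-compact variable $\alpha$; your one-line ``standard compactness plus semi-continuity argument'' glosses over the fact that $(0,1]$ is not compact, and one must still argue that the optimiser cannot escape to $0^+$ (equivalently $s\to\infty$), which the paper handles via the finiteness of $E_\text{sp}^{(2)}(R,P)$ for $R>R_\infty$. Rockafellar's framework, though heavier to set up (the paper verifies the closed-saddle-element conditions and the boundedness condition (II)), packages this attainment issue directly. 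Either route is valid.

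Two smaller points. In (ii) you write ``strict convexity of $E_0(s,P)$'' where you mean strict \emph{concavity}. More substantively, the paper does not argue strictness for $s\mapsto E_0(s,P)$ globally but rather for $s\mapsto K_{R,P}(s,\sigma^\star)$ at the optimal $\sigma^\star$: it passes to the Nussbaum--Szko{\l}a distributions, computes the second derivative as a (negative) variance, and shows that vanishing variance would force the saddle value to be zero, contradicting $P\in\mathscr{P}_R(\mathcal{X})$. Your phrase ``away from degenerate configurations'' points in the right direction, but the actual mechanism is precisely the positivity hypothesis encoded in $\mathscr{P}_R(\mathcal{X})$, and you should make that dependence explicit.
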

\noindent The proof is provided in Appendix \ref{proof:saddle}.

\begin{lemm}[Representation] \label{lemm:opt}
	For any $R\in(R_\infty, C   )$, let $(\alpha_R^\star, \sigma_R^\star)$ be the saddle-point of $F_{R,{U}_\mathcal{X}}(\cdot,\cdot)$.
	It follows that
	\begin{align} \label{eq:opt}
	\left(\alpha_R^\star,
	\sigma_R^\star \right) = 
	\left( -E_\textnormal{sp}'(R),
	\frac{ \left(U_\mathcal{X} W^{\alpha_R^\star}\right)^{1/\alpha_R^\star} }{ \Tr \left[ \left(U_\mathcal{X} W^{\alpha_R^\star}\right)^{1/\alpha_R^\star} \right] } \right).
	\end{align}
\end{lemm}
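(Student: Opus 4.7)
The plan splits in two: determine the form of $\sigma_R^\star$ via the channel symmetry applied to the saddle-point identity of Lemma~\ref{lemm:saddle}(iii), and identify $\alpha_R^\star$ via an envelope-theorem argument connecting $E_\text{sp}^{(1)}$ and $E_\text{sp}^{(2)}$.

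For the $\sigma$-coordinate I would first check that $U_\mathcal{X}\in\mathscr{P}_R(\mathcal{X})$ on the open rate window $R\in(R_\infty,C)$; this follows from Lemma~\ref{lemm:input} combined with the positivity of $E_\text{sp}(R)$ there. Lemma~\ref{lemm:saddle} then yields a unique saddle-point $(\alpha_R^\star,\sigma_R^\star)$ of $F_{R,U_\mathcal{X}}$ obeying the fixed-point equation in Lemma~\ref{lemm:saddle}(iii). The symmetry $W_x=V^{x-1}W_1(V^\dagger)^{x-1}$ with $V$ unitary of order $|\mathcal{X}|$ makes $U_\mathcal{X}W^\alpha=\frac{1}{|\mathcal{X}|}\sum_x V^{x-1}W_1^\alpha(V^\dagger)^{x-1}$ invariant under conjugation by $V$, and the same is true of any continuous functional calculus of it. I would then argue $\sigma_R^\star$ itself is $V$-invariant by a standard symmetrization: if $(\alpha_R^\star,\sigma)$ is a saddle-point, so is $(\alpha_R^\star,V\sigma V^\dagger)$ by the unitary covariance of $W$ and of the Petz R\'enyi divergence, and uniqueness forces $V\sigma_R^\star V^\dagger=\sigma_R^\star$. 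Cyclicity of the trace then gives $D_\alpha(W_x\|\sigma_R^\star)=D_\alpha(W_1\|\sigma_R^\star)$ independently of $x$. Substituting this equality into the fixed-point equation of Lemma~\ref{lemm:saddle}(iii), the common exponential factor $\mathrm{e}^{(1-\alpha)D_\alpha(W_1\|\sigma_R^\star)}$ pulls out of the sum, so that $\sigma_R^\star\propto(U_\mathcal{X}W^{\alpha_R^\star})^{1/\alpha_R^\star}$; normalizing to unit trace yields the announced expression for the second coordinate.

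For the $\alpha$-coordinate, I would use that by Lemma~\ref{lemm:input} and the equivalence of the two sphere-packing representations recorded just before Theorem~\ref{theo:refined}, $E_\text{sp}(R)=E_\text{sp}^{(1)}(R,U_\mathcal{X})=\sup_{s\geq 0}\{E_0(s,U_\mathcal{X})-sR\}$ is a Legendre transform of the concave function $s\mapsto E_0(s,U_\mathcal{X})$. Plugging the $\sigma_R^\star$ obtained above into $F_{R,U_\mathcal{X}}(\alpha,\cdot)$ and simplifying reduces $\min_\sigma F_{R,U_\mathcal{X}}(\alpha,\sigma)$ to an expression of the form $E_0(s,U_\mathcal{X})-sR$ under the reparameterization that aligns the two supremum formulations, so the unique optimizers of the two suprema correspond. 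The envelope theorem applied to the Legendre transform then gives $s_R^\star=-E_\text{sp}'(R)$, and transporting this back through the reparameterization delivers the claimed value of $\alpha_R^\star$.

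The main obstacle I anticipate is the uniqueness/differentiability infrastructure that legitimizes both the envelope theorem and the transfer of optimizers: one needs differentiability of $R\mapsto E_\text{sp}(R)$ on $(R_\infty,C)$ together with uniqueness of the Gallager maximizer $s_R^\star$, both of which ultimately rest on strict concavity of $s\mapsto E_0(s,U_\mathcal{X})$. These regularity properties should be extractable from the sphere-packing exponent analysis of \cite{HM16}, but careful handling is required near the endpoints $R\downarrow R_\infty$ (where $s_R^\star\to\infty$) and $R\uparrow C$ (where the optimizer approaches the boundary of the admissible $\alpha$-interval and the saddle-point description degenerates). A secondary check is that the $V$-invariance step genuinely needs the uniqueness clause (ii) of Lemma~\ref{lemm:saddle} with $P=U_\mathcal{X}$, so one must verify $U_\mathcal{X}\in\mathscr{P}_R(\mathcal{X})$ rather than merely $U_\mathcal{X}\in\mathscr{P}(\mathcal{X})$.
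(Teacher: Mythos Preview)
Your proposal is correct and follows the same overall architecture as the paper's proof: establish that $D_{\alpha_R^\star}(W_x\|\sigma_R^\star)$ is independent of $x$, feed this into the fixed-point identity of Lemma~\ref{lemm:saddle}(iii) so that the exponential weight cancels and $\sigma_R^\star\propto(U_\mathcal{X}W^{\alpha_R^\star})^{1/\alpha_R^\star}$, and then read off $\alpha_R^\star$ from the derivative of the Legendre transform $R\mapsto\sup_{s\geq 0}\{E_0(s,U_\mathcal{X})-sR\}$.

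The one genuine difference is in how you reach the equality of the $D_{\alpha_R^\star}(W_x\|\sigma_R^\star)$. The paper argues directly from the fact that $U_\mathcal{X}$ \emph{maximizes} $E_\text{sp}^{(2)}(R,\cdot)$: since $P\mapsto F_{R,P}(\alpha_R^\star,\sigma_R^\star)$ is affine and attains its maximum at the interior point $U_\mathcal{X}$, the coefficients $D_{\alpha_R^\star}(W_x\|\sigma_R^\star)$ must all coincide. You instead run a symmetrization: conjugation by $V$ permutes the outputs $W_x$ and fixes $U_\mathcal{X}$, so $(\alpha_R^\star,V\sigma_R^\star V^\dagger)$ is again a saddle-point, and the uniqueness clause of Lemma~\ref{lemm:saddle}(ii) forces $V\sigma_R^\star V^\dagger=\sigma_R^\star$, whence the divergences are equal by unitary invariance. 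Both arguments are short; yours is more self-contained (it does not require unpacking the optimality condition for the outer maximization over $P$) but needs the extra check $U_\mathcal{X}\in\mathscr{P}_R(\mathcal{X})$, which you correctly flag. For the $\alpha$-coordinate the paper simply cites \cite[Eq.~(42)]{HM16}, which is exactly the envelope-theorem computation you spell out, so there is no substantive difference there.
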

\begin{proof}
	Since Lemma \ref{lemm:input} implies that ${U}_\mathcal{X}$ attains $E_\text{sp}^{(2)} (R,\cdot)$, one observes from the definition of $E_\text{sp}^{(2)}$ that all the quantities $D_{\alpha_R^\star}(W_x\|\sigma_R^\star)$, $x\in\mathcal{X}$ are equal.
	By item (iii) of Lemma \ref{lemm:saddle}, we obtain a representation of $\sigma_R^\star$ in Eq.~\eqref{eq:opt}.
	The optimal $\alpha_R^\star = - \partial E_\textnormal{sp}(r,U_\mathcal{X}) / \partial r |_{r=R}$ follows from \cite[Eq.~(42)]{HM16}.
\end{proof}

\begin{lemm}
	[Invariance] \label{lemm:invariance}
	For any $R\in(R_\infty, C   )$, we have
	\begin{align}
	F_{R,P}( \alpha_R^\star, \sigma_R^\star) = E_\textnormal{sp}(R) > 0, \quad \forall P\in\mathscr{P}(\mathcal{X}),
	\end{align}
	where $\alpha_R^\star$ and $\sigma_R^\star$ are defined in Eq.~\eqref{eq:opt}.
\end{lemm}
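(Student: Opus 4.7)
The strategy is to exploit linearity of $F_{R,P}(\alpha,\sigma)$ in $P$ together with the fact that, at the saddle point $(\alpha_R^\star,\sigma_R^\star)$, the conditional R\'enyi divergence collapses to a single number that does not see $P$. Concretely, I would argue that $D_{\alpha_R^\star}(W_x\|\sigma_R^\star)$ is the same constant for every $x \in \mathcal{X}$; then $D_{\alpha_R^\star}(W\|\sigma_R^\star|P) = \sum_x P(x) D_{\alpha_R^\star}(W_x\|\sigma_R^\star)$ depends only on that constant, so $F_{R,P}(\alpha_R^\star,\sigma_R^\star)$ takes the same value for every $P\in\mathscr{P}(\mathcal{X})$. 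Specialising to $P = U_\mathcal{X}$ and invoking Lemma~\ref{lemm:saddle}(i) together with Lemma~\ref{lemm:input} identifies the common value with $\max_{P}E_{\textnormal{sp}}^{(2)}(R,P) = E_{\textnormal{sp}}(R)$.

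The main technical step is therefore the constancy of $D_{\alpha_R^\star}(W_x\|\sigma_R^\star)$ in $x$, and I would prove it by a direct symmetry argument. By Lemma~\ref{lemm:opt}, $\sigma_R^\star$ is proportional to $(U_\mathcal{X} W^{\alpha_R^\star})^{1/\alpha_R^\star}$, and in the proof of Lemma~\ref{lemm:input} it was shown that $V U_\mathcal{X} W^\alpha V^\dagger = U_\mathcal{X} W^\alpha$ for every $\alpha\in(0,1]$. Since $V$ is unitary, functional calculus gives $V\sigma_R^\star V^\dagger = \sigma_R^\star$, hence $V^{x-1}\sigma_R^\star V^{-(x-1)} = \sigma_R^\star$ for all $x$. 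Combining this with the defining relation $W_x = V^{x-1} W_1 V^{-(x-1)}$ and the unitary invariance of the R\'enyi divergence yields
\begin{equation*}
D_{\alpha_R^\star}(W_x \| \sigma_R^\star) = D_{\alpha_R^\star}\!\bigl(W_1 \,\bigl\|\, V^{-(x-1)}\sigma_R^\star V^{x-1}\bigr) = D_{\alpha_R^\star}(W_1 \| \sigma_R^\star) \qquad \forall\, x \in \mathcal{X}.
\end{equation*}
A secondary route, implicit in the proof of Lemma~\ref{lemm:opt}, is to note that because $U_\mathcal{X}$ maximises the $P$-linear functional $P\mapsto E_{\textnormal{sp}}^{(2)}(R,P)$, a first-order optimality argument on the simplex forces equality of all the $D_{\alpha_R^\star}(W_x\|\sigma_R^\star)$; I would use whichever is more transparent in the exposition.

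Once constancy is in hand, the equality $F_{R,P}(\alpha_R^\star,\sigma_R^\star) = E_{\textnormal{sp}}(R)$ is immediate from the definition~\eqref{eq:F}. For strict positivity when $R \in (R_\infty, C)$, I would invoke the standard facts $E_0(0,P)=0$ and $\partial_s E_0(s,P)|_{s=0}=I(P,W)$: choosing a capacity-achieving $P^\ast$ so that $I(P^\ast,W)=C>R$, a sufficiently small $s>0$ already makes $E_0(s,P^\ast)-sR>0$, whence $E_{\textnormal{sp}}(R)\geq E_{\textnormal{sp}}^{(1)}(R,P^\ast)>0$. I do not anticipate a genuine obstacle; the whole argument is essentially bookkeeping, with the only nontrivial observation being that the symmetry of the channel forces $\sigma_R^\star$ to commute with $V$ and thereby trivialises the $x$-dependence in $F_{R,P}$.
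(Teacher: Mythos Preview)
Your proposal is correct and follows essentially the same approach as the paper: the paper's proof is a terse two-sentence sketch that invokes the symmetry computation of Lemma~\ref{lemm:input} together with the explicit form of $\sigma_R^\star$ from Lemma~\ref{lemm:opt} to assert that $F_{R,P}(\alpha,\sigma_R^\star)$ is independent of $P$, and then cites an external reference for $E_{\textnormal{sp}}(R)>0$. You have simply unpacked that sketch --- making the $V$-invariance of $\sigma_R^\star$ and the resulting constancy of $D_{\alpha_R^\star}(W_x\|\sigma_R^\star)$ explicit --- and replaced the citation for positivity by the standard first-order argument at $s=0$; both are fine and nothing is missing.
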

\begin{proof}
	Following the argument in Lemma \ref{lemm:input} and recalling Eq.~\eqref{eq:opt} in Lemma \ref{lemm:opt}, one can verify that $\sup_{\alpha\in(0,1]} F_{R,P}(\alpha, \sigma_R^\star) = \sup_{s\geq 0} \left\{ E_0(s,U_\mathcal{X}) - sR  \right\} = E_\text{sp}(R)$ for all $P\in\mathscr{P}(\mathcal{X})$. 
	Further, we obtain $E_\text{sp}(R) > 0$ for $R\in(R_\infty, C   )$ from the result in  \cite[Proposition 10]{HM16}.
\end{proof}

\section{Proof of the Main Result} \label{sec:proof}

For rates in the range $R\leq R_\infty$, we have $E_\text{sp}(R) = +\infty$. The bound in Eq.~\eqref{eq:goal} obviously holds.
Hence, we consider the case of $ R\in(R_\infty,C)$ and fix the rate throughout the proof.

We first pose the channel coding problem into a binary hypothesis testing through Lemma \ref{lemm:hypothesis}, which originates from Blahut \cite{Bla74} for the classical case. 
\begin{lemm}[Hypothesis Testing Reduction] \label{lemm:hypothesis}
	For any code $\mathcal{C}_n$ with message size $\mathrm{e}^{nr}$, there exists an $\mathbf{x}^n \in \mathcal{C}_n$ such that
	\begin{align}
	\epsilon_{\max}\left(\mathcal{C}_n\right) \geq \max_{  \sigma \in \mathcal{S}(H) } \widehat{\alpha}_{\exp\{-nr\}} \left( W_{\mathbf{x}^n}^{\otimes n} \| \sigma^{\otimes n} \right). 
	\end{align}		
\end{lemm}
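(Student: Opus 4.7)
The plan is to adapt Blahut's classical hypothesis-testing reduction \cite{Bla74} to the c-q setting by reinterpreting each POVM element of the decoder as a binary test against an auxiliary product state $\sigma^{\otimes n}$.

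First, I would fix any $\sigma\in\mathcal{S}(\mathcal{H})$ and view each $\Pi_{n,m}$ as a binary test between the null $W_{\mathbf{x}^n(m)}^{\otimes n}$ and the alternative $\sigma^{\otimes n}$; its type-I error equals $\epsilon_m(\mathcal{C}_n)$ and its type-II error equals $\Tr[\Pi_{n,m}\sigma^{\otimes n}]$, so by \eqref{eq:alpha} one obtains $\epsilon_m(\mathcal{C}_n)\geq\widehat{\alpha}_{\Tr[\Pi_{n,m}\sigma^{\otimes n}]}(W_{\mathbf{x}^n(m)}^{\otimes n}\|\sigma^{\otimes n})$. POVM completeness yields $\sum_m\Tr[\Pi_{n,m}\sigma^{\otimes n}]=\Tr[\sigma^{\otimes n}]=1$; since $|\mathcal{M}|=\mathrm{e}^{nr}$, pigeonhole produces an index $m_\sigma$ with $\Tr[\Pi_{n,m_\sigma}\sigma^{\otimes n}]\leq\mathrm{e}^{-nr}$, and the monotonicity of $\widehat{\alpha}_\mu$ in $\mu$ then gives $\epsilon_{\max}(\mathcal{C}_n)\geq\widehat{\alpha}_{\mathrm{e}^{-nr}}(W_{\mathbf{x}^n(m_\sigma)}^{\otimes n}\|\sigma^{\otimes n})$ for every $\sigma$.

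Second, to eliminate the dependence of the selected codeword on $\sigma$, I would exploit the channel symmetry \eqref{eq:sym}. For any $\mathbf{x}^n$ we have $W_{\mathbf{x}^n}^{\otimes n}=U_{\mathbf{x}^n}W_{\mathbf{1}}^{\otimes n}U_{\mathbf{x}^n}^\dagger$ with the site-wise product unitary $U_{\mathbf{x}^n}=V^{x_1-1}\otimes\cdots\otimes V^{x_n-1}$, and $\widehat{\alpha}_\mu$ is invariant under joint unitary conjugation of its two arguments. For any $V$-invariant $\sigma$---in particular the saddle-point $\sigma_R^\star$ of Lemma~\ref{lemm:opt}, which commutes with $V$---one has $U_{\mathbf{x}^n}\sigma^{\otimes n}U_{\mathbf{x}^n}^\dagger=\sigma^{\otimes n}$, so $\widehat{\alpha}_{\mathrm{e}^{-nr}}(W_{\mathbf{x}^n}^{\otimes n}\|\sigma^{\otimes n})$ does not depend on $\mathbf{x}^n$. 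After the supremum over $\sigma$ the bound becomes codeword-independent, and any fixed $\mathbf{x}^n\in\mathcal{C}_n$ serves.

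The main obstacle I anticipate is precisely this last step: for a non-$V$-invariant test state the conjugated operator $U_{\mathbf{x}^n}\sigma^{\otimes n}U_{\mathbf{x}^n}^\dagger$ is a product state but generally not a tensor power, so the literal unitary-invariance trick fails. The standard remedy is to symmetrize the decoder by averaging the POVM over the action of $\langle V\rangle$ on each tensor factor, producing a code whose maximal error probability exceeds the original by at most a factor polynomial in $|\mathcal{X}|$---a cost harmlessly absorbed into the $1-o(1)$ pre-factor of \eqref{eq:goal}. Once this reduction is in place, what remains is a Stein-type strong large-deviation estimate of $\widehat{\alpha}_{\mathrm{e}^{-nr}}(W_{\mathbf{x}^n}^{\otimes n}\|(\sigma_R^\star)^{\otimes n})$ at an exponentially small type-II level, to which the Bahadur--Ranga Rao machinery mentioned in the introduction can be directly applied.
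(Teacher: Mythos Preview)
Your first paragraph is exactly the paper's proof: fix $\sigma$, note that the POVM elements satisfy $\sum_{m}\beta(\Pi_{n,m};\sigma^{\otimes n})=1$, pigeonhole an index $m$ with $\beta(\Pi_{n,m};\sigma^{\otimes n})\le \mathrm{e}^{-nr}$, and conclude $\epsilon_{\max}(\mathcal{C}_n)\ge\epsilon_m(\mathcal{C}_n)=\alpha(\Pi_{n,m};W_{\mathbf{x}^n(m)}^{\otimes n})\ge\widehat{\alpha}_{\mathrm{e}^{-nr}}(W_{\mathbf{x}^n(m)}^{\otimes n}\|\sigma^{\otimes n})$. The paper then simply asserts the maximum over $\sigma$ and stops.

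You are more scrupulous than the paper in flagging that the pigeonholed index $m_\sigma$ depends on $\sigma$, so the literal statement ``there exists a single $\mathbf{x}^n$ that works for \emph{all} $\sigma$'' is not what the bare pigeonhole delivers. The paper glosses over this point, and for its purposes that is harmless: in the only application (Eq.~\eqref{eq:sharp19}) the lemma is invoked with one fixed state $\sigma=\sigma_R^\star$, so the weaker form ``for each $\sigma$ there exists $\mathbf{x}^n$'' already suffices. Your symmetry argument (unitary invariance of $\widehat{\alpha}_\mu$ together with $V$-invariance of $\sigma_R^\star$) is a clean way to make the single-codeword statement honest for the state that actually matters, and is a genuine addition over what the paper writes. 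The further remedy you sketch---averaging the decoder over the cyclic group to handle non-$V$-invariant $\sigma$---is unnecessary here and would belong to a different line of argument; likewise the Bahadur--Ranga Rao discussion at the end is part of the proof of Theorem~\ref{theo:refined}, not of this lemma.
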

\noindent The proof is provided in Appendix \ref{proof:hypothesis}.

Let us now commence with the proof of Theorem \ref{theo:refined}.
Fix arbitrary $\gamma, \xi >0$. Let $\gamma_n := \left( \frac12 + \gamma\right) \frac{\log n}{n}$ and $R_n := R - \gamma_n$. The choice of the rate back-off term $\gamma_n$ will become evident later.
Choose $N_1 \in\mathbb{N}$ such that $R_n \geq R - \xi > R_\infty$.
Let $\sigma_R^\star$ be defined in Eq.~\eqref{eq:opt}, and from Lemma \ref{lemm:hypothesis}, we have 
\begin{align} \label{eq:sharp19}
\epsilon_{\max}\left(\mathcal{C}_n\right) \geq \widehat{\alpha}_{\exp\{-n R_n\}} \left( W_{\mathbf{x}^n}^{\otimes n} \| \sigma_R^{\star{\otimes n}} \right).
\end{align}

In the following, we provide a lower bound for the type-I error $ \widehat{\alpha}_{\exp\{-n R_n\}} \left( W_{\mathbf{x}^n}^{\otimes n} \| \sigma_R^{\star{\otimes n}} \right)$.
Let ${p}^n := \bigotimes_{i=1}^n p_{x_i}$ and ${q}^n := \bigotimes_{i=1}^n q_{x_i}$, where $(p_{x_i},q_{x_i})$ are Nussbaum-Szko{\l}a distributions \cite{NS09} of $(W_{x_i},\sigma_R^\star)$ for every $i\in [n]$. Since $D_\alpha(W_{x_i}\|\sigma_R^\star) = D_\alpha(p_{x_i}\|q_{x_i})$, for all  $\alpha\in(0,1]$,  we shorthand 
$\phi_n(R_n) := \sup_{\alpha\in(0,1]} F_{R_n,P_{\mathbf{x}^n}} (\alpha, \sigma_{R}^\star )$, where $P_{\mathbf{x}^n}$ is the empirical distribution of $\mathbf{x}^n$.
Moreover, item (iii) in Lemma \ref{lemm:saddle} implies that the state $\sigma_R^\star$ dominates all the channel outputs: $\sigma^\star_R \gg W_x$, for all $x\in \texttt{supp}(P_{\mathbf{x}^n})$, 
Hence, we have $p^n\ll q^n$. 
Subsequently, for every $i\in[n]$, we let $q_{x_i}(\omega) = 0$, for all $\omega \not\in \texttt{supp}(p_{x_i})$.
We apply Nagaoka's argument \cite{Nag06}  by choosing $\delta = \exp\{n R_n - n\phi_n(R_n)\}$ to yield, for any $0\leq Q_n\leq \mathds{1}$,
\begin{align}  \label{eq:sharp15}
&\alpha\left(Q_n; W_{\mathbf{x}^n}^{\otimes n} \right) + \delta \beta\left(Q_n; \sigma_R^{\star{\otimes n}} \right) \geq  \frac{ \alpha\left(\mathscr{U};p^n \right) + \delta \beta\left( \mathscr{U};q^n\right) }{2},
\end{align}
where
$\alpha\left( {\mathscr{U}};  {p}^n \right) := \sum_{\omega\in {\mathscr{U}}^\mathrm{c}}  {p}^n(\omega)$, 
$\beta\left( {\mathscr{U}}; {q}^n\right) := \sum_{\omega\in {\mathscr{U}}}  {q}^n(\omega)$,	
and  
${\mathscr{U}} := \left\{  \omega:  p^n(\omega)\mathrm{e}^{ n{\phi}_n\left( {R_n}\right)} >  q^n(\omega) \mathrm{e}^{ n{R_n}}  \right\}$.

Next, we employ Bahadur-Ranga Rao's concentration inequality, Theorem \ref{theo:Rao} in Appendix \ref{app:tight}, to further lower bound $\alpha\left( {\mathscr{U}};  {p}^n \right)$ and $\beta\left( {\mathscr{U}}; {q}^n\right)$.
Before proceeding, we need to introduce some notation.
We define the \emph{tilted distributions}, for every $i\in[n]$, $\omega \in \texttt{supp}(p_{x_i})$, and $t\in[0,1]$ by
\begin{align}
\hat{q}_{{x_i},t}(\omega) := \frac{  {p}_{x_i}(\omega)^{1-t}  {q}_{x_i}(\omega)^{t} }{ \sum_{\omega\in \texttt{supp}(p_{x_i})}  {p}_{x_i}(\omega)^{1-t}  {q}_{x_i}(\omega)^{t} }.
\end{align}
Let
\begin{align}
\begin{split}
\Lambda_{0,{x_i}} (t) &:= \log \mathbb{E}_{ {p}_{x_i}} \left[ \mathrm{e}^{t \log \frac{ {q}_{x_i}}{ {p}_{x_i}} } \right]; \\
\Lambda_{1,{x_i}} (t) &:= \log \mathbb{E}_{ {q}_{x_i}} \left[ \mathrm{e}^{t \log \frac{ {p}_{x_i}}{ {q}_{x_i}} } \right].
\end{split}
\label{eq:zero_derivative}	
\end{align}
Since $p^n$ and $q^n$ are mutually absolutely continuous, the maps $t\mapsto \Lambda_{j,x_i}(t)$, $j\in\{0,1\}$ are differentiable  for all $t\in[0,1]$. 
One can immediately verify the following partial derivatives with respect to $t$:
\begin{align}
\begin{split}
&\Lambda'_{0,{x_i}} (t) = \mathbb{E}_{\hat{q}_{{x_i},t}} \left[ \log \frac{ {q}_{x_i}}{ {p}_{x_i}}  \right], \;  \Lambda''_{0,{x_i}} (t) = \Var_{\hat{q}_{{x_i},t}} \left[ \log \frac{ {q}_{x_i}}{ {p}_{x_i}}  \right],\\
&\Lambda''_{0,{x_i}} (t) = \Var_{\hat{q}_{{x_i},t}} \left[ \log \frac{ {q}_{x_i}}{ {p}_{x_i}}  \right], \;
\Lambda'_{1,{x_i}} (t)  = \mathbb{E}_{\hat{q}_{{x_i},1-t}} \left[ \log \frac{ {p}_{x_i}}{ {q}_{x_i}}  \right].
\label{eq:second_derivative} 
\end{split}
\end{align}
With $\Lambda_{j,{x_i}} (t)$ in Eq.~\eqref{eq:zero_derivative}, we can define 
\begin{align}
&{\Lambda}_{j,P_{\mathbf{x}^n}} (t) := \sum_{x\in \mathcal{X} } P_{\mathbf{x}^n} (x) \Lambda_{j,x}(t), \quad\quad\;
j\in\{0,1\}; \label{eq:FL0}\\
&\Lambda_{j,P_{\mathbf{x}^n}}^*(z) := \sup_{t\in\mathbb{R}} \left\{ tz - {\Lambda}_{j,P_{\mathbf{x}^n}}(t)  \right\},
\quad j\in\{0,1\}, \label{eq:FL}
\end{align}
where $\Lambda_{j,P_{\mathbf{x}^n} }^*(z)$ in Eq.~(\ref{eq:FL}) is the \emph{Fenchel-Legendre transform} of ${\Lambda}_{j,P_{\mathbf{x}^n}}(t)$. The quantities $\Lambda_{j,P_{\mathbf{x}^n} }^*(z)$ would appear in the lower bounds of $\alpha\left( {\mathscr{U}};  {p}^n \right)$ and $\beta\left( {\mathscr{U}}; {q}^n\right)$ obtained by Bahadur-Randga Rao's inequality as shown later.

In the following, we relate the Fenchel-Legendre transform $\Lambda_{j,P_{\mathbf{x}^n}}^* (z)$ to the desired error-exponent function $\phi_n(R_n)$.
Such a relationship is stated in Lemma \ref{lemm:regularity}; the proof is provided in Appendix \ref{proof:regularity}.
\begin{lemm} \label{lemm:regularity}
	Under the prevailing assumptions and for all $R_n\in(R_\infty,C)$, the following holds:
	\begin{itemize}
		
		\item[\textnormal{(i)}] $\Lambda^*_{0,P_{\mathbf{x}^n} } \left(  {\phi}_n({R_n}) - {R_n} \right) =  {\phi}_n(R_n)$;
		\item[\textnormal{(ii)}] $\Lambda^*_{1,P_{\mathbf{x}^n} } \left( R_n -  {\phi}_n({R_n}) \right) = R_n$;
		\item[\textnormal{(iii)}] There exists a unique $t^\star = \frac{s^\star}{1+s^\star} \in (0,1)$, such that ${\Lambda}'_{0,P_{\mathbf{x}^n} }(t^\star) =  {\phi}_n(R_n) - R_n$, where $s^\star := \frac{ \partial  {\phi}_n (r) }{\partial r}|_{r=R_n}$.
	\end{itemize}
\end{lemm}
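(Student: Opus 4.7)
The plan is to translate the definition of $\phi_n(R_n)$ into a purely classical Legendre-transform statement by exploiting the Nussbaum--Szko\l a correspondence $D_\alpha(W_x\|\sigma_R^\star) = D_\alpha(p_x\|q_x)$, which recasts the R\'enyi divergence as a classical log-moment generating function. Unfolding the definitions under the substitution $\alpha = 1-t$ gives $\Lambda_{0,x}(t) = (\alpha-1)\,D_\alpha(W_x\|\sigma_R^\star)$, and therefore
\begin{align}
\phi_n(R_n) \;=\; \sup_{t \in [0,1)} \frac{-\Lambda_{0,P_{\mathbf{x}^n}}(t) - t R_n}{1-t}.
\end{align}
I would work throughout from this one-dimensional variational representation, which puts all three claims into the reach of elementary convex-analytic tools.

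For part (iii), the first step is to establish that $\Lambda_{0,P_{\mathbf{x}^n}}$ is strictly convex and smooth on $(0,1)$. Each $\Lambda_{0,x}$ is a classical cumulant generating function; strict convexity can only fail if $p_x = q_x$, but if this held for every $x \in \textnormal{\texttt{supp}}(P_{\mathbf{x}^n})$ the displayed supremum would be non-positive, contradicting $\phi_n(R_n) > 0$ from Lemma \ref{lemm:invariance}. The rate condition $R_n \in (R_\infty, C)$ then rules out boundary maximizers at $t=0$ and $t\to 1$, so the supremum is attained at a unique interior $t^\star\in(0,1)$. Setting to zero the derivative of $g(t) := (-\Lambda_{0,P_{\mathbf{x}^n}}(t) - tR_n)/(1-t)$ and combining with the value identity $(1-t^\star)\phi_n(R_n) = -\Lambda_{0,P_{\mathbf{x}^n}}(t^\star) - t^\star R_n$ gives, after one line of algebra, $\Lambda'_{0,P_{\mathbf{x}^n}}(t^\star) = \phi_n(R_n) - R_n$. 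Reparameterizing with $s = t/(1-t)$ and invoking the envelope theorem on the resulting $s$-problem identifies $s^\star$ with $\partial\phi_n(r)/\partial r|_{r=R_n}$ (up to the sign convention from Lemma \ref{lemm:opt}), so $t^\star = s^\star/(1+s^\star)$.

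Part (i) then follows directly from (iii): writing $\Lambda^*_{0,P_{\mathbf{x}^n}}(z) = \sup_t\{tz - \Lambda_{0,P_{\mathbf{x}^n}}(t)\}$, the first-order condition in (iii) identifies $t^\star$ as the unique maximizer at $z = \phi_n(R_n) - R_n$; substituting the value identity above for $\Lambda_{0,P_{\mathbf{x}^n}}(t^\star)$ collapses the expression to $\phi_n(R_n)$. For part (ii), the elementary symmetry $\Lambda_{1,x}(t) = \Lambda_{0,x}(1-t)$ (swap the roles of $p_x$ and $q_x$) yields the Legendre-dual identity $\Lambda^*_{1,P_{\mathbf{x}^n}}(z) = z + \Lambda^*_{0,P_{\mathbf{x}^n}}(-z)$, so specializing at $z = R_n - \phi_n(R_n)$ and invoking (i) gives exactly $R_n$.

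The main technical obstacle is not the algebra but the regularity step: one must verify that $\Lambda_{0,P_{\mathbf{x}^n}}$ is differentiable and strictly convex on an open interval containing $t^\star$, and that the Nussbaum--Szko\l a pairs $(p_x,q_x)$ do not degenerate. Both are secured by the mutual absolute continuity $p_x \ll q_x$ (a consequence of $\sigma_R^\star \gg W_x$ for all $x \in \textnormal{\texttt{supp}}(P_{\mathbf{x}^n})$, which is guaranteed by Lemma \ref{lemm:saddle}(iii) together with the symmetry of the channel), combined with the strict positivity $\phi_n(R_n) > 0$ from Lemma \ref{lemm:invariance} which forces at least one non-trivial pair.
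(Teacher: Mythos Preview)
Your proposal is correct and follows essentially the same route as the paper: the paper reparameterizes via $s = t/(1-t)$ and the auxiliary $E_0(s) = -(1+s)\,\Lambda_{0,P_{\mathbf{x}^n}}\!\bigl(\tfrac{s}{1+s}\bigr)$ rather than working in $t$ directly, but the substance---strict convexity established by contradiction with $\phi_n(R_n)>0$ from Lemma~\ref{lemm:invariance}, the first-order conditions, and the symmetry $\Lambda_{1,x}(t)=\Lambda_{0,x}(1-t)$ for part~(ii)---is identical. One small tightening: strict convexity of $\Lambda_{0,x}$ fails whenever $p_x$ is merely \emph{proportional} to $q_x$ on $\texttt{supp}(p_x)$, not only when $p_x=q_x$; the paper handles this case by computing $\Tr[p_x^0 q_x]$ and invoking $R_n>R_\infty$ explicitly, which you should incorporate into your write-up.
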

Item (iii) in Lemma \ref{lemm:regularity} shows that the optimizer $t$ in Eq.~\eqref{eq:FL} always lies in the compact set $[0,1]$. Further, Eqs.~\eqref{eq:zero_derivative} and \eqref{eq:second_derivative} ensure that 
$\Lambda_{0,x_i}(t) = \Lambda_{1,x_i} (1-t)$, 
$\Lambda_{0,x_i}'(t) = -\Lambda_{1,x_i}' (1-t)$, 
$\Lambda_{0,x_i}''(t) = \Lambda_{1,x_i}'' (1-t).$ 
We define the following quantities:
\begin{align}
V_{\max}  &:= \max_{t \in [0,1],\, x\in\mathcal{X} }  {\Lambda}''_{0,x}(t);  \\
V_{\min} &:= \min_{ t \in [0,1],\, x\in\mathcal{X} }  {\Lambda}''_{0,x}(t);\label{eq_mhvmin} \\
T_{\max}  &:= \max_{t \in [0,1],\, x\in\mathcal{X} }  {T}_{0,x}(t); \\
T_{0, x }(t) &:= \mathbb{E}_{\hat{q}_{x,t}} \left[ \left| \log \frac{ {q}_x}{ {p}_x} - \Lambda'_{0,x}(t) \right|^3 \right]; \label{eq:third_derivatie}
\end{align}
$T_{1,x}(t) := T_{0,x}(1-t)$; and $K_{\max} := 15\sqrt{2\pi} T_{\max}/V_{\min}$.
Note that for every $x\in\mathcal{X}$, $\Lambda_{0,x}''(\cdot)$ and $T_{0,x}(\cdot)$ are continuous functions on $[0,1]$ from the definitions in Eqs.~\eqref{eq:second_derivative}, \eqref{eq:third_derivatie} (see also \cite[Lemma 9]{AW11}).
The maximization and minimization in the above definitions are well-defined and finite.
Moreover, Lemma \ref{lemm:positivity} guarantees that
$V_{\min}$ is bounded away from zero.
\begin{lemm}
	[Positivity] \label{lemm:positivity}
	For any $R_n\in(R_\infty,C)$ and $P_{\mathbf{x}^n} \in\mathscr{P}(\mathcal{X})$, $\Lambda_{0,P_\mathbf{x}^n}''(t) > 0$, for all $t\in[0,1]$.
\end{lemm}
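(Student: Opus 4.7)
The plan is to argue by contradiction. Assume that $\Lambda_{0,P_{\mathbf{x}^n}}''(t_0)=0$ for some $t_0\in[0,1]$. Writing
\begin{equation*}
\Lambda_{0,P_{\mathbf{x}^n}}''(t)=\sum_{x\in\mathcal{X}} P_{\mathbf{x}^n}(x)\,\Var_{\hat{q}_{x,t}}\!\Bigl[\log\frac{q_x}{p_x}\Bigr]
\end{equation*}
as a convex combination of non-negative variances shows that each summand with $P_{\mathbf{x}^n}(x)>0$ must vanish at $t_0$. Item~(iii) of Lemma~\ref{lemm:saddle} gives $\sigma_R^\star\gg W_x$ for all $x\in\texttt{supp}(P_{\mathbf{x}^n})$, so $p_x\ll q_x$ and $\texttt{supp}(\hat{q}_{x,t})=\texttt{supp}(p_x)$ is independent of $t$. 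Consequently, the vanishing of the variance at a single $t_0$ is equivalent to $\log(q_x/p_x)$ being constant on $\texttt{supp}(p_x)$, which forces $\Lambda_{0,x}''(t)=0$ for \emph{every} $t\in[0,1]$; i.e., $\Lambda_{0,x}$ is affine.

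Combining $\Lambda_{0,x}(0)=0$ with the Nussbaum--Szko\l{}a identity $\Lambda_{0,x}(t)=-t\,D_{1-t}(p_x\|q_x)=-t\,D_{1-t}(W_x\|\sigma_R^\star)$, affineness of $\Lambda_{0,x}$ forces $\alpha\mapsto D_\alpha(W_x\|\sigma_R^\star)$ to equal the constant $K_x:=D(W_x\|\sigma_R^\star)$ on $(0,1]$. Invoking the channel symmetry $W_x=V^{x-1}W_1 V^{\dagger(x-1)}$ together with the $V$-invariance $V\sigma_R^\star V^\dagger=\sigma_R^\star$ (established in the proof of Lemma~\ref{lemm:input}), we obtain $D_\alpha(W_x\|\sigma_R^\star)=D_\alpha(W_1\|\sigma_R^\star)$ for every $x\in\mathcal{X}$, so $K_x\equiv K$ is independent of $x$. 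In particular $D_\alpha(W\|\sigma_R^\star|P)=K$ for every $P\in\mathscr{P}(\mathcal{X})$ and every $\alpha\in(0,1]$.

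Substituting into the definition of $F$ yields
\begin{equation*}
F_{R,P_{\mathbf{x}^n}}(\alpha,\sigma_R^\star)=\frac{\alpha-1}{\alpha}(R-K),
\end{equation*}
whose supremum over $\alpha\in(0,1]$ equals $+\infty$ when $K>R$ (since $(1/\alpha-1)(K-R)\to+\infty$ as $\alpha\to 0^+$) and equals $0$ when $K\leq R$ (the latter attained at $\alpha=1$). This is incompatible with Lemma~\ref{lemm:invariance}, whose proof identifies $\sup_{\alpha\in(0,1]}F_{R,P_{\mathbf{x}^n}}(\alpha,\sigma_R^\star)$ with $E_\textnormal{sp}(R)$---strictly positive by that lemma and strictly finite because $R>R_\infty$ (as $E_\textnormal{sp}(R)=+\infty$ only for $R\leq R_\infty$). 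The contradiction establishes $\Lambda_{0,P_{\mathbf{x}^n}}''(t)>0$ for every $t\in[0,1]$.

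The main obstacle is the first step: lifting the single-point vanishing $\Lambda_{0,P_{\mathbf{x}^n}}''(t_0)=0$ to identical vanishing of $\Lambda_{0,x}''$ on $[0,1]$ and, via the Nussbaum--Szko\l{}a identity, to constancy of $D_\alpha(W_x\|\sigma_R^\star)$ in $\alpha$. The key enabling fact is that $\texttt{supp}(\hat{q}_{x,t})$ does not vary with $t$, which is precisely what the dominance $p_x\ll q_x$---ultimately furnished by Lemma~\ref{lemm:saddle}(iii)---guarantees. Once this is secured, the quantum non-degeneracy reduces to an algebraic contradiction with the sphere-packing invariance of $\sigma_R^\star$.
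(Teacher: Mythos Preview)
Your proof is correct and follows essentially the same line as the paper's: assume the second derivative vanishes, deduce that $\log(q_x/p_x)$ is constant on $\texttt{supp}(p_x)$ for each $x$ in the support of $P_{\mathbf{x}^n}$ (equivalently, $p_x = c_x q_x$ there), conclude that $\alpha\mapsto D_\alpha(p_x\|q_x)$ is constant, and obtain a contradiction with the positivity of the exponent supplied by Lemma~\ref{lemm:invariance}.

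Two small differences worth noting. First, your appeal to the channel symmetry (to make $K_x$ independent of $x$) is unnecessary: once each $D_\alpha(W_x\|\sigma_R^\star)$ is constant in $\alpha$ for $x\in\texttt{supp}(P_{\mathbf{x}^n})$, the average $D_\alpha(W\|\sigma_R^\star|P_{\mathbf{x}^n})$ is already a constant $K'$ and the rest of your argument goes through verbatim with $K'$ in place of $K$. Second, where you close by a case split on $K\lessgtr R$ (yielding supremum $0$ or $+\infty$, both contradicting $E_{\text{sp}}(R)\in(0,\infty)$), the paper instead identifies the constant explicitly as $-\sum_x P_{\mathbf{x}^n}(x)\log\Tr[p_x^0 q_x]$ and compares it directly to the rate via $R_\infty$ to force the supremum to be $0$. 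Your case split is slightly cleaner in that it avoids the somewhat delicate comparison with $R_\infty$; the paper's route makes the connection to the critical rate more explicit.
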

\begin{proof}
	Assume $\Lambda_{0,P_\mathbf{x}^n }''(t)$ is zero for some $t\in[0,1]$. This is equivalent to
	\begin{align} \label{eq:sharp3}
	{p}_{x_i} (\omega) =  {q}_{x_i} (\omega) \cdot \mathrm{e}^{-\Lambda'_{0,x_i}(t)}, 
	\quad \forall \omega \in p_{x_i}, 
	\quad \forall i\in [n]. 
	\end{align}	
	Summing the right-hand side of Eq.~\eqref{eq:sharp3} over $\omega \in p_{x_i}$ gives
	$1 = \Tr\left[ p_{x_i}^0 q_{x_i} \right]  \mathrm{e}^{-\Lambda'_{0,{x_i}}(t)}, \quad \forall i\in [n]$.
	Then, Eqs.~\eqref{eq:sharp3} 
	and the above equation imply that
	\begin{align}
	\phi_n ( R_n ) 
	&= \sup_{0<\alpha\leq 1} \frac{\alpha-1}{\alpha} \left( R_n + \sum_{x\in\mathcal{X}} P_{\mathbf{x}^n} (x)\log \Tr\left[ p_x^0 q_x \right]    \right) \notag\\
	&= 0, \notag
	\end{align}
	where  we use the fact that  $R_n > R_\infty = -\sum_{x\in\mathcal{X}} P_{\mathbf{x}^n} (x)\log \Tr\left[ p_x^0 q_x \right]$; see Eq.~\eqref{eq:R_inf1}).
	However, Lemma \ref{lemm:invariance} implies that $\phi_n(R_n) =  E_\text{sp}(R_n) >0$, 
	which leads to a contradiction.
\end{proof}


Now, we are ready to derive the lower bounds to $\alpha\left({\mathscr{U}};  p^n \right)$ and $\beta\left({\mathscr{U}}; q^n \right)$.
Let $N_2\in\mathbb{N}$ be sufficiently large such that for all $n\geq N_2$,
\begin{align} \label{eq_largen}
\sqrt{n} \geq 
\frac{ 1+ \left( 1 + K_{\max}  \right)^2 }{ \sqrt{ V_{\min}   } }.
\end{align}
Applying Bahadur-Randga Rao's inequality (Theorem \ref{theo:Rao}) to  $Z_i = \log  {q}_i - \log  {p}_i$ with the probability measure $\lambda_i =  {p}_i$, and $z =  {R_n} -  {\phi}_n( {R_n}) $ gives
\begin{align}
\alpha\left( {\mathscr{U}};  {p}^n \right)
&= \Pr\left\{ \frac1n\sum_{i=1}^n Z_i \geq  R_n - {\phi}_n( {R_n})  \right\} \\	
&\geq \frac{2A}{\sqrt{n}} \exp\left\{  -n \Lambda^*_{0,P_{\mathbf{x}^n} } \left(  {\phi}_n( {R_n}) -  {R_n} \right)  \right\} \label{eq:sharp12mh}
\end{align}
where
$A  := 
\frac{ \mathrm{e}^{-K_{\max} } }{ \sqrt{ 4\pi V_{\max}  }    }.$
Similarly, applying Theorem \ref{theo:Rao} to $Z_i = \log  {p}_i - \log  {q}_i$ with the probability measure $\lambda_i =  {q}_i$, and $z =  {\phi}_n( {R_n}) -  {R_n}$ yields
\begin{align}
\beta\left( {\mathscr{U}}; {q}^n\right) 
&= \Pr\left\{ \frac1n\sum_{i=1}^n Z_i \geq  {\phi}_n( {R_n}) -  {R_n}  \right\} \\
&\geq \frac{2A}{\sqrt{n}} \exp\left\{  -n \Lambda^*_{1,P_{\mathbf{x}^n} } \left(  {R_n} -  {\phi}_n( {R_n}) \right)  \right\}  \label{eq:sharp13mh}.
\end{align}

Continuing from Eq.~(\ref{eq:sharp12mh}) and item (i) in Lemma \ref{lemm:regularity} gives
\begin{equation}
\alpha\left( {\mathscr{U}};  {p}^n \right)\geq \frac{2A}{\sqrt{n}}  \exp\{-n   {\phi}_n\left( {R_n}\right)\}. \label{eq:sharp12}
\end{equation}
Eq.~(\ref{eq:sharp13mh}) together with item (iii) in Lemma \ref{lemm:regularity} yields
\begin{equation}
\beta\left( {\mathscr{U}}; {q}^n\right) \geq \frac{2A}{\sqrt{n}}  \exp\{-n {R_n}\} 
= 2A n^\gamma \exp\{-n {R}\}.
\label{eq:sharp13}
\end{equation}
Let $N_3\in\mathbb{N}$ such that $A n^\gamma > 1$, for all $n\geq N_3$. Then Eq.~\eqref{eq:sharp13} implies that $\beta\left( {\mathscr{U}}; {q}^n\right) > 2\exp\{-n {R}\}$.
Thus, we can bound the left-hand side of Eq.~(\ref{eq:sharp15}) from below by 
$\frac{A}{\sqrt{n}} \mathrm{e}^{-n \phi_n(R_n)}.$ For any test $0\leq Q_n \leq \mathbb{1}$ such that 
$\beta(Q_n;\sigma_R^{\star \otimes n}) \leq \exp\{-n {R}\}$,
we have
\begin{align}
&\widehat{\alpha}_{\exp\{-n R_n \}} \left( W_{\mathbf{x}^n}^{\otimes n} \| \sigma_R^{\star \otimes n} \right)  = \alpha(Q_n;\rho^n)  \notag \\
&\geq \frac{A}{\sqrt{n}} \exp\{-n   {\phi}_n\left( { R_n }\right)\} 
= \frac{A}{\sqrt{n}} \exp\left\{ -n E_\text{sp}(R_n)
\right\}, \label{eq:sharp20}
\end{align}
where the last equality follows from Lemma \ref{lemm:invariance}.

Finally, it remains to remove the back-off term $R_n = R - \gamma_n$ in Eq.~\eqref{eq:sharp20}.
By Taylor's theorem, we have
\begin{align} \label{eq:sharp24}
E_\text{sp}(R-\gamma_n) = E_\text{sp}(R) - \gamma_n  E_\text{sp}'(R) + \frac{\gamma_n^2}{2} E_\text{sp}''(\bar{R}),
\end{align}
for some $\bar{R} \in (R-\xi, R)$ and $E_\text{sp}''(\bar{R}) := \left.\frac{ \partial^2 E^{(1)}_\text{sp} (r, U_\mathcal{X})}{\partial r^2}\right|_{r = \bar{R}}$.
Further, one can calculate that
\begin{align}
E_\text{sp}''(\bar{R}) 
&= - \left.\left( \frac{\partial^2 E_0(s,U_\mathcal{X})}{\partial s^2}\right|_{s=\bar{s}} \right)^{-1} \\
&= \frac{ (1 + \bar{s})^3}{ \Lambda_{0,U_\mathcal{X}}''\left( \frac{\bar{s}}{1+\bar{s}}  \right) }
\leq \frac{ (1 + \bar{s})^3}{ V_{\min}} =: \Upsilon , \label{eq:sharp21}
\end{align}
where $\bar{s} = \frac{1-\alpha_{\bar{R}}^\star}{\alpha_{\bar{R}}^\star}$.
From item (iii) in Lemma \ref{lemm:saddle}, it follows that both $\bar{s}$ and $|E_\text{sp}'(R)| = s^\star$ are both positive and finite for $\bar{R}\in(R_\infty,C)$ and ${R}\in(R_\infty,C)$. Together with the fact that $V_{\min} > 0$, we have $ \Upsilon \in \mathbb{R}_{>0}$.
We apply Taylor's expansion on the function $n^{-(\cdot)}$ again to yield
\begin{align} 
n^{-\frac12 \left( 1 + \left|E_\text{sp}'(R)\right| \right) - \gamma_n \Upsilon } 
&= n^{-\frac12 \left( 1 + \left|E_\text{sp}'(R)\right| \right)} \cdot
\left( 1 - \frac{\log n}{n^{\bar{x}\Gamma}} \gamma_n \Upsilon \right) \notag \\
&= n^{-\frac12 \left( 1 + \left|E_\text{sp}'(R)\right| \right)} \cdot
\left( 1 - o(1) \right), \label{eq:sharp23}
\end{align}
where the first equality holds for some $\bar{x} \in (0, \gamma_n)$, and the last line follows from the definition $\gamma_n = (\frac12+\gamma) \frac{\log}{n}$.
Finally, by combining Eqs.~\eqref{eq:sharp19}, \eqref{eq:sharp20}, and \eqref{eq:sharp23}, we obtain the desired Eq.~\eqref{eq:goal}
for sufficiently large $n\geq N_0 := \max\left\{N_1, N_2, N_3\right\}$.

\section{Discussion} \label{sec:conclusions}

In this work, we establish a sphere-packing bound with a refined polynomial pre-factor that coincides with the best classical results \cite[Theorem 1]{AW11} to date.
As discussed by Altu\u{g} and Wagner \cite[Sec.~VII]{AW11}, the pre-factor is correct for binary symmetric channels but slightly worse for binary erasure channels (in the order of $1/\sqrt{n}$).
On the other hand, our pre-factor matches the recent result of the random coding upper bound \cite[Theorem 2]{Hon15}, where the pre-factor has been shown to be exact.
Hence, we conjecture that the established result is optimal for general symmetric c-q channels.

This work admits variety of potential extensions.
First, the symmetric c-q channel studied in this paper is a {covariant channel} with a cyclic group:
\begin{align}
W_{ \mathcal{U}_\text{in} (g) x \mathcal{U}_\text{in}(g)^\dagger  } = \mathcal{U}_\text{out}(g)  W_x \mathcal{U}_\text{out}(g)^\dagger, \quad \forall g,x\in \mathcal{X},
\end{align}
where $\mathcal{U}_\text{in}$ and $\mathcal{U}_\text{out}$ are the unitary representations on $\mathcal{X}$ and $\mathcal{S(H)}$ such that $\mathcal{U}_\text{in} (g)\, x \,\mathcal{U}_\text{in}(g)^\dagger = (x + g) \text{ mod } |\mathcal{X}|$ and $\mathcal{U}_\text{out}(g) = V^{g}$.
It would be interesting to investigate whether the refined sphere-packing bound can be extended to covariant quantum channels $\mathcal{N}: \mathcal{S}(\mathcal{H}_\text{in}) \to \mathcal{S}(\mathcal{H}_\text{out})$ with arbitrary compact groups.
Second, the random coding bound in the quantum case has been proved only for pure-state channels \cite{Hol00}. It is promising to prove the bound for this class of c-q channels by employing the symmetry property.
Finally, the refinement provides a new possibility for moderate deviation analysis in c-q channels \cite{AW14b}, which is left as future work.
\section*{Acknowledgements}
MH is supported by an ARC Future Fellowship under Grant FT140100574. 
MT is funded by an ARC Discovery Early Career Researcher Award (DECRA)
fellowship and acknowledges support from the ARC Centre of Excellence
for Engineered Quantum Systems (EQUS).
\appendix
\section{A Tight Concentration Inequality} \label{app:tight}
Let $\left(Z_i\right)_{i=1}^n$ be a sequence of independent, real-valued random variables whose probability measures are $\lambda_i$. Let $\Lambda_i(t) := \log \mathbb{E}\left[\mathrm{e}^{t Z_i}\right]$ and define the Fenchel-Legendre transform of $\frac1n\sum_{i=1}^n \Lambda_i(\cdot)$ to be:
$\Lambda_n^*(z) := \sup_{t \in \mathbb{R}} \left\{ zt - \frac1n \sum_{i=1}^n \Lambda_i(t) \right\}, \quad \forall z\in\mathbb{R}$.
Thenb there exists a real number $t^\star\in(0,1]$ for every $z\in\mathbb{R}$ such that 
$z=\frac1n\sum_{i=1}^n \Lambda_i'(t^\star)$ and 
$\Lambda_n^*(z) = zt^\star - \frac1n\sum_{i=1}^n \Lambda_i (t^\star)$. 
\label{eq:Rao1}
Define the probability measure $\tilde{\lambda}_i$ via 
$\frac{\mathrm{d}\tilde{\lambda}_i}{\mathrm{d}\lambda_i} (z_i) := \mathrm{e}^{t^\star z_i - \Lambda_i(t^\star)}$,
and let $\bar{Z}_i:= Z_i - \mathbb{E}_{\tilde{\lambda}_i}\left[Z_i\right]$. Furthermore, define  $m_{2,n} := \sum_{i=1}^n \Var_{\tilde{\lambda}_i}\left[\bar{Z}_i\right]$, $m_{3,n} := \sum_{i=1}^n \mathbb{E}_{\tilde{\lambda}_i}\left[\left|\bar{Z}_i\right|^3\right]$,	and $K_n(t^\star) := \frac{15\sqrt{2\pi}m_{3,n}}{m_{2,n}}$. With these definitions, we can now state the following sharp concentration inequality for $\frac1n\sum_{i=1}^n Z_i$:
\begin{theo}[Bahadur-Ranga Rao's Concentration Inequality {\cite[Proposition 5]{AW14}}, \cite{DZ98}] \label{theo:Rao} 
	Given
	\begin{align} \label{eq:m2n}
	\sqrt{m_{2,n}} \geq 1 + \left(1+K_n\left( t^\star \right) \right)^2,
	\end{align}
	it follows that
	\begin{align}
	\Pr\left\{ \frac1n\sum_{i=1}^n Z_i \geq z   \right\}
	\geq \mathrm{e}^{-n\Lambda_n^*(z)} \frac{ \mathrm{e}^{-K_n(t^\star)} }{ 2\sqrt{2\pi m_{2,n} }  }.
	\end{align}
\end{theo}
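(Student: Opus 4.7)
The plan is to establish this sharp lower bound via the classical exponential-tilting argument of Bahadur and Ranga Rao, combined with a Berry--Esseen estimate applied under the tilted product measure. The argument has three stages: a change of measure, a restriction to a short window, and a Gaussian approximation.

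First, using $\frac{\mathrm{d}\lambda_i}{\mathrm{d}\tilde{\lambda}_i}(z_i) = \mathrm{e}^{-t^\star z_i + \Lambda_i(t^\star)}$ together with the characterization $z = \frac{1}{n}\sum_{i=1}^n \Lambda_i'(t^\star) = \frac{1}{n}\sum_{i=1}^n \mathbb{E}_{\tilde{\lambda}_i}[Z_i]$ provided by the hypotheses, I would rewrite the tail probability as
\[
\Pr\!\left\{ \tfrac{1}{n}\sum_{i=1}^n Z_i \geq z\right\} = \mathrm{e}^{-n \Lambda_n^*(z)} \, \mathbb{E}_{\tilde{\lambda}}\!\left[\mathrm{e}^{-t^\star \bar{S}_n} \mathbf{1}\{\bar{S}_n \geq 0\}\right],
\]
where $\bar{S}_n := \sum_{i=1}^n \bar{Z}_i$ is centered under the tilted product measure $\tilde{\lambda} = \bigotimes_i \tilde{\lambda}_i$. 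The identity $\Lambda_n^*(z) = t^\star z - \frac{1}{n}\sum_i \Lambda_i(t^\star)$ is exactly what produces the clean exponent $-n\Lambda_n^*(z)$ out front.

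Second, I would restrict the indicator to a window $\bar{S}_n \in [0,a]$ for an $a > 0$ to be optimized, use $\mathrm{e}^{-t^\star \bar{S}_n} \geq \mathrm{e}^{-t^\star a}$ on this window, and apply the Berry--Esseen theorem under $\tilde{\lambda}$. Since the $\bar{Z}_i$ are independent and centered with $\sum_i \Var_{\tilde{\lambda}_i}[\bar{Z}_i] = m_{2,n}$ and $\sum_i \mathbb{E}_{\tilde{\lambda}_i}[|\bar{Z}_i|^3] = m_{3,n}$, Berry--Esseen gives
\[
\left|\Pr_{\tilde{\lambda}}\{\bar{S}_n \leq x\} - \Phi\!\left(x/\sqrt{m_{2,n}}\right)\right| \leq C_0\, m_{3,n}/m_{2,n}^{3/2}
\]
for an absolute constant $C_0$, so that $\Pr_{\tilde{\lambda}}\{0 \leq \bar{S}_n \leq a\} \geq \Phi(a/\sqrt{m_{2,n}}) - \tfrac{1}{2} - 2C_0\, m_{3,n}/m_{2,n}^{3/2}$. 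Choosing $a$ proportional to $\sqrt{m_{2,n}}$ makes the Gaussian contribution of order a positive constant; more precisely, linearizing $\Phi$ near the origin converts this into a term of order $a/\sqrt{2\pi\, m_{2,n}}$.

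Third, I would carefully track constants so that the final bound reads $\mathrm{e}^{-n \Lambda_n^*(z)} \cdot \mathrm{e}^{-K_n(t^\star)} / (2\sqrt{2\pi m_{2,n}})$. The factor $\mathrm{e}^{-K_n(t^\star)}$ emerges from $\mathrm{e}^{-t^\star a}$ after the optimization, with the specific definition $K_n(t^\star) = 15\sqrt{2\pi}\, m_{3,n}/m_{2,n}$ packaging the relevant moment ratio; the $1/\sqrt{2\pi m_{2,n}}$ is the Gaussian density at $0$ times the effective window width; and the factor $\tfrac{1}{2}$ absorbs the Berry--Esseen remainder, leaving a loss of at most half the main term. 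The hypothesis $\sqrt{m_{2,n}} \geq 1 + (1 + K_n(t^\star))^2$ is exactly the quantitative condition needed for the Berry--Esseen error $m_{3,n}/m_{2,n}^{3/2}$ to be dominated. The main obstacle is this simultaneous constant-tracking under competing demands: the window must be large enough to yield a Gaussian probability of order $1/\sqrt{m_{2,n}}$, yet small enough that $\mathrm{e}^{-t^\star a}$ does not deteriorate past $\mathrm{e}^{-K_n(t^\star)}$, while the Berry--Esseen remainder remains uniformly dominated---verifying that the optimum delivers the advertised prefactor with no avoidable slack is the delicate content of the Bahadur--Ranga Rao refinement of Cramer's theorem that this proposition encapsulates.
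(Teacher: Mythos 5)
The paper itself does not prove Theorem~\ref{theo:Rao}: it is imported directly from Altu\u{g}--Wagner (Proposition~5 of that reference) and Dembo--Zeitouni, so there is no internal proof to compare against. Your outline does identify the correct mechanism underlying Bahadur--Ranga Rao: tilt to the changed measures $\tilde\lambda_i$, use $\Lambda_n^*(z)=zt^\star-\tfrac1n\sum_i\Lambda_i(t^\star)$ to pull $\mathrm{e}^{-n\Lambda_n^*(z)}$ out of the likelihood ratio, center so that $\bar S_n=\sum_i\bar Z_i$ has mean zero under the tilted product measure, and then bound $\mathbb{E}_{\tilde\lambda}\bigl[\mathrm{e}^{-t^\star\bar S_n}\mathbf{1}\{\bar S_n\geq0\}\bigr]$ from below by restricting to a short window $[0,a]$ and invoking Berry--Esseen. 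The change-of-measure identity in your first step is correct, and this is indeed the route taken in the cited references.

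However, the entire content of the stated theorem is the explicit pre-factor $\mathrm{e}^{-K_n(t^\star)}/(2\sqrt{2\pi m_{2,n}})$ under the explicit condition $\sqrt{m_{2,n}}\geq 1+(1+K_n(t^\star))^2$, and your sketch stops exactly where that content begins --- indeed you acknowledge this. Two concrete gaps. First, ``choosing $a$ proportional to $\sqrt{m_{2,n}}$'' contradicts your own subsequent linearization of $\Phi$ at the origin: that linearization requires $a/\sqrt{m_{2,n}}$ to be small, not of order one, and such a choice of $a$ would also make $\mathrm{e}^{-t^\star a}$ exponentially small in $\sqrt{m_{2,n}}$ rather than equal to $\mathrm{e}^{-K_n(t^\star)}$. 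The correct window length is governed by the moment ratio $m_{3,n}/m_{2,n}$, and the hypothesis on $\sqrt{m_{2,n}}$ is precisely the quantitative condition ensuring this window is short on the $\sqrt{m_{2,n}}$ scale. Second, you never insert an explicit Berry--Esseen constant or verify that under the stated threshold the remainder is dominated by half the Gaussian main term; the factor $2$ in the denominator and the constant $15\sqrt{2\pi}$ in the definition of $K_n$ are exactly the residue of that bookkeeping, and obtaining them is not routine. As written, this is a plan for a proof of some Bahadur--Ranga Rao--type bound, not a proof of the stated inequality with its stated constants and hypothesis.
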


\section{Proofs of Miscellaneous Lemmas} \label{app:proofs}

\subsection{Proof of Lemma \ref{lemm:saddle}} \label{proof:saddle}
Let $R>R_\infty$ and $P\in\mathscr{P}(\mathcal{X})$ be arbitrary.
It is convenient to reparameterize the function $F_{R,P}$ by the substitution $\alpha = \frac{1}{1+s}$:
\begin{align}
\left. F_{R,P}\left( \alpha, \sigma \right)\right|_{\alpha = \frac{1}{1+s}} 
= - sR + s D_{\frac{1}{1+s}}\left( W \| \sigma | P\right)
=: K_{R,P} (s,\sigma). \label{eq:K}
\end{align}
In the following, we prove the existence of a saddle-point of $K_{R,P}(\cdot,\cdot)$ on $\mathbb{R}_{\geq 0}\times \mathcal{S}_{P,W}(\mathcal{H})$, where $\mathbb{R}_{\geq 0} := [0,\infty)$.
By Ref.~\cite[Lemma 36.2]{Roc64},
$(s^\star,\sigma^\star)$ is a saddle point of $K_{R,P}(\cdot,\cdot)$ if and only if the supremum in 
\begin{align}
\sup_{s\in\mathbb{R}_{\geq 0}} \inf_{\sigma \in \mathcal{S}_{P,W}(\mathcal{H})} K_{R,P} (s,\sigma) \label{eq:saddle15}
\end{align}
is attained at $s^\star$, the infimum in 
\begin{align}
\inf_{\sigma \in \mathcal{S}_{P,W}(\mathcal{H})} \sup_{s\in\mathbb{R}_{\geq 0}}  K_{R,P} (s,\sigma) \label{eq:saddle16}
\end{align}
is attained at $\sigma^\star$, and the two extrema in Eqs.~\eqref{eq:saddle15}, \eqref{eq:saddle16} are equal and finite.
We first claim that
\begin{align}
\forall s\in\mathbb{R}_{\geq 0}, \quad \inf_{\sigma \in \mathcal{S}_{P,W}(\mathcal{H})} K_{R,P}(s,\sigma)
= \inf_{\sigma \in \mathcal{S}(\mathcal{H})} K_{R,P}(s,\sigma). \label{eq:saddle21}
\end{align}
To see that, observe that for any $s \in \mathbb{R}_{\geq 0}$, the definition of the $\alpha$-R\'enyi divergence yields 
\begin{align}
\forall \sigma \in \mathcal{S(H)}\backslash \mathcal{S}_{P,W}(\mathcal{H}), \quad 
D_{\frac{1}{1+s}}\left( W\|\sigma | P \right) = +\infty, \label{eq:saddle23}
\end{align}
which, in turn, implies
\begin{align} \label{eq:saddle22}
\forall \sigma \in \mathcal{S(H)}\backslash \mathcal{S}_{P,W}(\mathcal{H}), \quad 
K_{R,P}(s,\sigma ) = +\infty.
\end{align}
Hence, Eq.~\eqref{eq:saddle21} yields
\begin{align}
\begin{split}
\sup_{s\in\mathbb{R}_{\geq 0}} \inf_{\sigma \in \mathcal{S}_{P,W}(\mathcal{H})} K_{R,P} (s,\sigma)
&= \sup_{s\in\mathbb{R}_{\geq 0}} \inf_{\sigma \in \mathcal{S}(\mathcal{H})} K_{R,P} (s,\sigma) 
=\sup_{s\in\mathbb{R}_{\geq 0}} \min_{\sigma \in \mathcal{S}(\mathcal{H})} K_{R,P} (s,\sigma),
\label{eq:saddle27}
\end{split}
\end{align}
where the last equality in Eq.~\eqref{eq:saddle27} follows from the
lower semi-continuity of the map $\sigma\mapsto D_{1/(1+s)}(W\|\sigma|P)$ \cite[Corollary III.25]{MO14b} and the compactness of $\mathcal{S(H)}$.
Further, by the fact $R> R_\infty$ and the definition of $E_\text{sp}^{(2)}$, we have
\begin{align}
E_\text{sp}^{(2)} (R,P) = \sup_{s\in\mathbb{R}_{\geq 0}} \min_{\sigma \in \mathcal{S}(\mathcal{H})} K_{R,P} (s,\sigma) < +\infty,
\end{align}
which guarantees the supremum in the right-hand side of Eq.~\eqref{eq:saddle27} is attained at some $s\in\mathbb{R}_{\geq 0}$, i.e.,
\begin{align} \label{eq:fact1}
\sup_{s\in\mathbb{R}_{\geq 0}} \inf_{\sigma \in \mathcal{S}_{P,W}(\mathcal{H})} K_{R,P} (s,\sigma)
= \max_{s\in\mathbb{R}_{\geq 0}} \min_{\sigma \in \mathcal{S}(\mathcal{H})} K_{R,P} (s,\sigma)  < +\infty.
\end{align}
Thus, we complete our claim in Eq.~\eqref{eq:saddle15}.
It remains to show that the infimum in Eq.\eqref{eq:saddle16} is attained at some $\sigma^\star \in \mathcal{S}_{P,W}(\mathcal{H})$ and the supremum and infimum are exchangeable.
To achieve this, we will show that $\left( \mathbb{R}_{\geq 0}, \mathcal{S}_{P,W}(\mathcal{H}), K_{R,P} \right)$ is a closed saddle-element (see Definition \ref{defn:saddle} below) and apply Rockafellar's saddle-point result, Theorem \ref{theo:saddle}, to conclude our claim.

\begin{defn} 
	[Closed Saddle-Element {\cite{Roc64}}] \label{defn:saddle}
	The triple $\left(\mathcal{A},\mathcal{B}, F\right)$ is called a closed saddle-element if for any\footnote{We denote by $\texttt{ri}$ and $\texttt{cl}$ the relative interior and the closure of a set, respectively.} 
	$x\in \texttt{ri}\left(\mathcal{A}\right)$ (resp.~$y\in \texttt{ri}\left(\mathcal{B}\right)$):
	\begin{itemize}
		\item[(a)] $\mathcal{B}$ (resp.~$\mathcal{A}$) is convex;
		\item[(b)] $F(x,\cdot)$ (resp.~$F(\cdot, y)$) is convex (resp.~concave) and lower (resp.~upper) semi-continuous; and
		\item[(c)] any accumulation point of $\mathcal{B}$ (resp.~$\mathcal{A}$) that does not belong to $\mathcal{B}$ (resp.~$\mathcal{A}$), say $y_o$ (resp.~$x_o$)
		satisfies $\lim_{y\to y_o} F(x,y) = +\infty$ (resp.~$\lim_{x\to x_o} F(x, y) = -\infty$).
	\end{itemize}
\end{defn}



\begin{theo}[The Existence of Saddle-Points {\cite[Theorem 8]{Roc64}, \cite[Theorem 37.3]{Roc70}}] \label{theo:saddle}
	Let $\left(\mathcal{A},\mathcal{B}, F\right)$ be any closed saddle-element on $\mathbb{R}^m\times \mathbb{R}^n$.
	\begin{itemize}
		\item[(I)] No non-zero $x_0$ has the property that, for all $x\in\textnormal{\texttt{ri}}(\mathcal{A})$ and $y\in\textnormal{\texttt{ri}}(\mathcal{B})$, the half-line $\left\{ x + t x_0: t\geq 0 \right\}$ is contained in $\mathcal{A}$ and $F(x+tx_0 , y)$ is a non-zero and non-decreasing function for $t\geq 0$.
		
		\item[(II)] No non-zero $y_0$ has the property that, for all $x\in\textnormal{\texttt{ri}}(\mathcal{A})$ and $y\in\textnormal{\texttt{ri}}(\mathcal{B})$, the half-line $\left\{ y + t y_0: t\geq 0 \right\}$ is contained in $\mathcal{B}$ and $F(x,y+ty_0)$ is a non-increasing function for $t\geq 0$.
	\end{itemize}
	If condition (I) is satisfied, then
	\begin{align} \label{eq:cond_I}
	\max_{x \in\mathcal{A}} \inf_{ y\in\mathcal{B}} F(x,y) 
	= \inf_{y\in\mathcal{B}} \sup_{x\in\mathcal{A}} F(x,y) < +\infty.
	\end{align}
	If condition (II) is satisfied, then
	\begin{align} \label{eq:cond_II}
	-\infty <
	\sup_{x \in\mathcal{A}} \inf_{ y\in\mathcal{B}} F(x,y) 
	= \min_{y\in\mathcal{B}} \sup_{x\in\mathcal{A}} F(x,y).
	\end{align}	
	If (I) and (II) are both satisfied, then $F$ has a saddle-point on $\mathcal{A}\times \mathcal{B}$.
\end{theo}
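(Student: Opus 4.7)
The plan is to prove this classical saddle-point theorem of Rockafellar by combining a minimax theorem (Sion or von Neumann type) with his theory of recession functions for unbounded convex sets. The closed saddle-element axioms (a)--(c) supply exactly the right ingredients: convexity of the domains, concave-upper/convex-lower semi-continuous behavior of $F$ in the two variables, and a ``blow-up at the relative boundary'' property that rules out pathological degeneracy when one approaches a boundary accumulation point not lying in $\mathcal{A}$ or $\mathcal{B}$. The argument splits naturally into three stages.

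First I would study the marginal functions $g(x) := \inf_{y \in \mathcal{B}} F(x,y)$ and $h(y) := \sup_{x \in \mathcal{A}} F(x,y)$ and show each is itself a closed convex (resp.\ concave) object. Using property (b) (lower semi-continuity of $F(x,\cdot)$ and convexity in $y$) together with property (c), I would show that the infimum defining $g(x)$ is attained for each $x\in \textnormal{\texttt{ri}}(\mathcal{A})$: any minimizing sequence either stays in a compact subset of $\mathcal{B}$, where lower semi-continuity gives attainment, or it accumulates at some $y_o \notin \mathcal{B}$, which is excluded by property (c) since $F(x,y)\to +\infty$ there. A symmetric argument handles $h(y)$, and standard marginal-function results then yield that $g$ is concave and upper semi-continuous while $h$ is convex and lower semi-continuous.

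Next I would handle the non-compactness of $\mathcal{A}$ and $\mathcal{B}$, which is where the nontrivial recession-cone conditions (I) and (II) enter. The key tool is Rockafellar's characterization of attainment of the extremum of a closed concave function in terms of its recession function: the supremum is attained iff no direction in the recession cone of the effective domain is ``upward''. Condition (I) is precisely this statement transcribed to $g$: it forbids any nonzero $x_0$ along which $F(\cdot, y)$, and hence $g$, is non-decreasing and strictly positive along a recession ray, thereby forcing the upper level sets $\{x : g(x) \geq c\}$ near $\sup g$ to be compact and the supremum to be attained at some $x^\star \in \mathcal{A}$. Condition (II) plays the dual role and forces attainment of $\min_{y\in \mathcal{B}} h(y)$ at some $y^\star$.

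Finally, to upgrade attainment to the minimax equalities, I would apply Sion's minimax theorem on the compact upper level sets of $g$ and lower level sets of $h$ produced in the previous step; the convex-concave semi-continuity from (b) is exactly Sion's hypothesis, and compactness is now available on these sublevel sets. This gives the two claimed identities under (I) and (II) separately. When both conditions hold, the maximizer $x^\star$ of $g$ and minimizer $y^\star$ of $h$ satisfy $F(x^\star, y^\star) = g(x^\star) = h(y^\star)$, so $(x^\star, y^\star)$ is the desired saddle-point. The main obstacle is the subtle interaction between the recession cones of $\mathcal{A}, \mathcal{B}$ and the boundary-blow-up condition (c) on possibly relatively open domains; reconciling these requires passing to closures and checking that the blow-up of $F$ at non-interior accumulation points is compatible with the recession analysis, which is precisely the technical heart of the proofs in \cite{Roc64, Roc70}.
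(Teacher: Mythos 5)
This theorem is not proved in the paper; it is invoked as a known result and cited directly to \cite[Theorem 8]{Roc64} and \cite[Theorem 37.3]{Roc70}. So there is no ``paper's own proof'' to compare against. What follows is therefore an assessment of your proposal on its own terms, and against the actual argument in Rockafellar's book.

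Steps one and two of your outline are in the right spirit: passing to the marginal functions $g(x)=\inf_{y\in\mathcal{B}}F(x,y)$ and $h(y)=\sup_{x\in\mathcal{A}}F(x,y)$, and reading conditions (I) and (II) as recession conditions that force $\max_x g$ and $\min_y h$ to be attained, is indeed parallel to how Rockafellar proceeds. (One small but real wrinkle you should not wave away: in the first step, a minimizing sequence for $\inf_y F(x,y)$ need not accumulate anywhere---it could escape to infinity in an unbounded $\mathcal{B}$---so property~(c) alone does not give attainment of the inner infimum; that attainment is not actually needed and $g$ is allowed to take the value $-\infty$.)

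The genuine gap is in the third step, where you upgrade attainment to the minimax equality by ``applying Sion's minimax theorem on the compact upper level sets of $g$ and lower level sets of $h$.'' Take (I) alone: the compact set you obtain is a level set $K=\{x\in\mathcal{A}: g(x)\geq c\}$. Applying Sion on $K\times\mathcal{B}$ (with $K$ the compact factor) gives
\begin{align}
\max_{x\in K}\,\inf_{y\in\mathcal{B}}F(x,y) \;=\; \inf_{y\in\mathcal{B}}\,\sup_{x\in K}F(x,y).
\end{align}
The left side equals $\max_{x\in\mathcal{A}}g(x)$ since $K$ contains all near-maximizers of $g$. But the right side need not equal $\inf_{y\in\mathcal{B}}h(y)$: for a fixed $y$, the supremum $\sup_{x\in\mathcal{A}}F(x,y)$ can be attained or approached outside $K$, because $K$ is a level set of $g$, not of $F(\cdot,y)$. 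So you only obtain $\sup_{x\in K}F(x,y)\leq h(y)$ pointwise, and hence only the trivial weak inequality $\max_x g\leq\inf_y h$, which holds always. You cannot symmetrically truncate $\mathcal{B}$ to a level set of $h$ either, because under (I) alone there is no guarantee such a level set of $h$ is compact. The equality you need is precisely the nontrivial content of the theorem, and Rockafellar obtains it not by reducing to a compact minimax theorem but through the duality machinery of conjugate saddle-functions and their parent convex bifunctions, where the recession conditions (I), (II) translate into conditions on the recession functions of those parents and the equality follows from closedness of the associated convex programs. Without that (or an equivalent argument), your step three does not close.
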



Fix an arbitrary $s \in \texttt{ri}\left( \mathbb{R}_{\geq 0} \right) = \mathbb{R}_{> 0}$.
We check that $\left( \mathcal{S}_{P,W}(\mathcal{H}), K_{R,P}(s, \cdot)\right)$ fulfills the three items in Definition \ref{defn:saddle}.
(a) The set $\mathcal{S}_{P,W}(\mathcal{H})$ is clearly convex.
(b) 
Since the map $\sigma \mapsto D_{1/(1+s)}(W\|\sigma|P)$ is convex (owing to Lieb's concavity theorem \cite{Lie73}) and lower semi-continuous on $\mathcal{L(H)}_+$ \cite[Corollary III.25]{MO14b}, 
by Eq.~\eqref{eq:K}, $\sigma \mapsto K_{R,P}(\alpha,\sigma)$ is also convex and lower semi-continuous on $\mathcal{S}_{P,W}(\mathcal{H})$.
(c) Due to the compactness of $\mathcal{S(H)}$, any accumulation point of $\mathcal{S}_{P,W}(\mathcal{H})$ that does not belong to $\mathcal{S}_{P,W}(\mathcal{H})$, say $\sigma_o$, satisfies $\sigma_o \in \mathcal{S(H)} \backslash \mathcal{S}_{P,W}(\mathcal{H})$.
By Eqs.~\eqref{eq:saddle23}, \eqref{eq:saddle22}, one finds $K_{R,P}(\alpha, \sigma_o) = +\infty$.


Next, fix an arbitrary $\sigma \in \texttt{ri}\left(  \mathcal{S}_{P,W}(\mathcal{H}) \right)$.
Owing to the convexity of $\mathcal{S}_{P,W}(\mathcal{H})$, it follows that $\texttt{ri}\left(  \mathcal{S}_{P,W}(\mathcal{H}) \right) $ $= \texttt{ri}\left(\texttt{cl}\left( \mathcal{S}_{P,W}(\mathcal{H})\right)\right)$ (see e.g.~\cite[Theorem 6.3]{Roc70}).
We first claim $\texttt{cl}\left( \mathcal{S}_{P,W}(\mathcal{H})\right) = \mathcal{S(H)}$.
To see this, observe that $\mathcal{S(H)}_{++} \subseteq  \mathcal{S}_{P,W}(\mathcal{H})$ since a full-rank density operator is not orthogonal with every $W_x$, $x\in\mathcal{X}$.
Hence, 
\begin{align}
\mathcal{S(H)}=
\texttt{cl}\left( \mathcal{S(H)}_{++} \right) 
\subseteq \texttt{cl} \left(  \mathcal{S}_{P,W}(\mathcal{H}) \right). \label{eq:saddle25}
\end{align}
On the other hand, the fact $\mathcal{S}_{P,W}(\mathcal{H}) \subseteq  \mathcal{S(H)}$ leads to
\begin{align}
\texttt{cl}\left( \mathcal{S}_{P,W}(\mathcal{H}) \right) \subseteq  
\texttt{cl}\left(\mathcal{S(H)}\right) 
= \mathcal{S(H)}. \label{eq:saddle26}
\end{align}
By Eqs.~\eqref{eq:saddle25} and \eqref{eq:saddle26}, we deduce that 
\begin{align}
\texttt{ri}\left(  \mathcal{S}_{P,W}(\mathcal{H}) \right)
= \texttt{ri}\left( \texttt{cl}\left( \mathcal{S}_{P,W}(\mathcal{H}) \right) \right)
= \texttt{ri}\left(  \mathcal{S}(\mathcal{H}) \right)
=  \mathcal{S(H)}_{++}, \label{eq:saddle24}
\end{align}
where the last equality in Eq.~\eqref{eq:saddle24} follows from \cite[Proposition 2.9]{Wei11}.
Hence, we obtain
\begin{align} \label{eq:saddle19}
\forall \sigma \in \texttt{ri}\left(  \mathcal{S}_{P,W}(\mathcal{H}) \right) \quad \text{and} \quad \forall x\in \mathcal{X},
\quad \sigma \gg W_x.
\end{align}
Now, we verify that $\left( \mathbb{R}_{\geq 0}, K_{R,P}(\cdot,\sigma)\right)$ satisfies the three items in Definition \ref{defn:saddle}.
(a) The set $\mathbb{R}_{\geq 0}$ is obviously convex.
(b) From Eqs.~\eqref{eq:saddle19} and the definition of the R\'enyi divergence , the map $s \mapsto D_{1/(1+s)}(W\|\sigma|P)$ is continuous on $\mathbb{R}_{\geq 0}$. 
Further, $s \mapsto s D_{1/(1+s)}(W\|\sigma|P)$ is concave on $\mathbb{R}_{\geq 0}$ \cite[Appendix B]{MO14b}.
By Eq.~\eqref{eq:K}, the map $s \mapsto K_{R,P}(s,\sigma)$ is concave and continuous on $\mathbb{R}_{\geq 0}$.
(c) Since $\mathbb{R}_{\geq 0}$ is closed, there is no accumulation point of $\mathbb{R}_{\geq 0}$ that does not belong to $\mathbb{R}_{\geq 0}$.

We are now in a position to prove item (i) of this Proposition.
Since the set $\mathcal{S}_{P,W}(\mathcal{H})$ is bounded, condition (II) is satisfied.
Equation \eqref{eq:cond_II} in Theorem \ref{theo:saddle} implies that
\begin{align}
- \infty < \sup_{ s\in\mathbb{R}_{\geq 0} } \inf_{\sigma \in \mathcal{S}_{P,W}(\mathcal{H})} K_{R,P} (s,\sigma)
= \min_{\sigma \in \mathcal{S}_{P,W}(\mathcal{H})} \sup_{ s\in\mathbb{R}_{\geq 0} }  K_{R,P} (s,\sigma). \label{eq:fact2}
\end{align}
Then Eqs.~\eqref{eq:fact1} and \eqref{eq:fact2} lead to the existence of a saddle-point of $K_{R,P}(\cdot,\cdot)$ on $\mathbb{R}_{\geq 0}\times \mathcal{S}_{P,W}(\mathcal{H})$.
Note that $K_{R,P}(s,\sigma) = F_{R,P}(1/(1+s),\sigma)$. 
We conclude the existence of a saddle-point of $F_{R,P}(\cdot,\cdot)$ on $(0,1]\times \mathcal{S}_{P,W}(\mathcal{H})$.
Hence, item (i) is proved.

%
%
%
%

We postpone the proof of the uniqueness of the optimizer to later and now show item (iii).
Given any $R\in(R_\infty, C   )$ and $P\in \mathscr{P}_R(\mathcal{X})$, one finds
\begin{align} \label{eq:saddle1}
\min_{\sigma \in \mathcal{S(H)}} \sup_{0<\alpha\leq 1} F_{R,P} (\alpha, \sigma) \in (0,+\infty).
\end{align}
If $\alpha^\star = 1$ and $\sigma^\star$ is a saddle point of $F_{R,P}(\cdot,\cdot)$, by Eq.~\eqref{eq:F} we deduce that $F_{R,P}(1, \sigma^\star) = 0$ for every possible $\sigma^\star$, which contradicts Eq.~\eqref{eq:saddle1}.
Hence, $\alpha^\star = 1$ is not a saddle point of $F_{R,P}(\cdot,\sigma^\star)$.

For any saddle-point $(\alpha^\star, \sigma^\star)$ of $F_{R,P}(\cdot,\cdot)$, it holds that
\begin{align} \label{eq:saddle3}
F_{R,P} (\alpha^\star, \sigma^\star) = \min_{\sigma\in\mathcal{S(H)}} F_{R,P} (\alpha^\star, \sigma)
= \frac{\alpha^\star - 1}{\alpha^\star} R + \frac{1-\alpha^\star}{\alpha^\star}  \min_{\sigma\in\mathcal{S(H)}}  D_{\alpha^\star} (W\|\sigma|P).
\end{align}
We claim the minimizer of Eq.~\eqref{eq:saddle3} must satisfy
\begin{align} \label{eq:saddle6}
\sigma^\star =  \frac{ 
	\left( \sum_{x\in\mathcal{X}} P(x) \frac{W_x^{\alpha^\star} }{ \Tr\left[ W_x^{\alpha^\star} (\sigma^\star)^{1-\alpha^\star} \right] }  \right)^{ \frac{1}{\alpha^\star}  } 	
}{  \Tr\left[  \left( \sum_{x\in\mathcal{X}} P(x) \frac{W_x^{\alpha^\star} }{ \Tr\left[ W_x^{\alpha^\star} (\sigma^\star)^{1-\alpha^\star} \right] }  \right)^{ \frac{1}{\alpha^\star}  } 	 \right] } 
=
\frac{\left( \sum_{x\in\mathcal{X}} P(x) W_x^{\alpha^\star} \mathrm{e}^{(1-\alpha^\star)D_{\alpha^\star}\left( W_x\| \sigma \right)} \right)^{ \frac{1}{\alpha^\star} }}{\Tr\left[ \left( \sum_{x\in\mathcal{X}} P(x) W_x^{\alpha^\star} \mathrm{e}^{(1-\alpha^\star)D_{\alpha^\star}\left( W_x\| \sigma \right)} \right)^{  \frac{1}{\alpha^\star}  }\right] }
\end{align}
for every $\alpha^\star \in (0,1)$.
Our approach closely follows from Hayashi and Tomamichel  \cite[Lemma 5]{HT14}.
For two density operators $\sigma, \omega \in\mathcal{S(H)}$ and a map $G:\mathcal{S(H)}\to \mathcal{L(H)}_\text{sa}$ (where $\mathcal{L(H)}_\text{sa}$ denotes the self-adjoint operators on $\mathcal{H}$), define the Fr\'echet derivative (see e.g.~\cite[Appendix C]{HT14}, \cite{HP14}\footnote{We note that the Fr\'echet derivative of functions involving matrices has other applications in quantum information theory; see e.g.~\cite{CH1,CH2,CH16RSPA}.})
\begin{align} \label{eq:Frechet}
\partial_\omega G(\sigma) := \mathsf{D} G (\sigma) [\omega - \sigma].
\end{align}
By letting
\begin{align}
g_\alpha (\sigma) := \sum_{x\in\mathcal{X}} P(x) \log \Tr \left[ W_x^\alpha \sigma^{1-\alpha} \right],
\end{align}
it follows that
\begin{align} \label{eq:target}
\sigma^\star  = \argmin_{\sigma\in\mathcal{S(H)}}  D_\alpha \left( W \| \sigma |P \right)
= \argmax_{\sigma\in\mathcal{S(H)}} g_\alpha(\sigma), \quad \forall \alpha \in (0,1).
\end{align}		
Since the map $\sigma\mapsto g_\alpha(\sigma)$ is strictly concave for every $\alpha\in(0,1)$ \cite{Lie73}, a sufficient and necessary condition for $\sigma$ to be an optimizer of Eq.~\eqref{eq:target} is $\partial_\omega g_\alpha (\sigma) = 0$ for all $\omega \in\mathcal{S(H)}$.
Direct calculation shows that
\begin{align} \label{eq:saddle9}
\partial_\omega g_\alpha (\sigma) 
= \Tr\left[  \sum_{x\in\mathcal{X}} P(x) \frac{W_x^\alpha}{\Tr\left[ W_x^\alpha \sigma^{1-\alpha} \right] } \partial_{\omega} \sigma^{1-\alpha}.
\right]
\end{align}
Next, we check that the fixed-points of the following map achieves the optimum:
\begin{align} \label{eq:opt2}
\sigma \mapsto  \frac{ 
	\left( \sum_{x\in\mathcal{X}} P(x) \frac{W_x^{\alpha} }{ \Tr\left[ W_x^{\alpha} \sigma^{1-\alpha} \right] }  \right)^{ \frac{1}{\alpha}  } 	
}{  \Tr\left[  \left( \sum_{x\in\mathcal{X}} P(x) \frac{W_x^{\alpha} }{ \Tr\left[ W_x^{\alpha} \sigma^{1-\alpha} \right] }  \right)^{ \frac{1}{\alpha}  } 	 \right] }.
\end{align}
Let 
\begin{align}
\chi_\alpha (\sigma) := \Tr\left[  \left( \sum_{x\in\mathcal{X}} P(x) \frac{W_x^{\alpha} }{ \Tr\left[ W_x^{\alpha} \sigma^{1-\alpha} \right] }  \right)^{ \frac{1}{\alpha}  } 	 \right] > 0, \quad \forall \alpha\in(0,1), \label{eq:chi}
\end{align}
and let	$\bar{\sigma}$ be a fix-point of the map in Eq.~\eqref{eq:opt2}.
Then, by Eqs.~\eqref{eq:opt2}, \eqref{eq:chi}, we have
\begin{align} \label{eq:saddle14}
\chi_\alpha (\bar{\sigma}) \cdot \bar{\sigma} = \left( \sum_{x\in\mathcal{X}} P(x) \frac{W_x^{\alpha} }{ \Tr\left[ W_x^{\alpha} \bar{\sigma}^{1-\alpha} \right] }  \right)^{ \frac{1}{\alpha}}.
\end{align}
Substituting Eq.~\eqref{eq:saddle14} into Eq.~\eqref{eq:saddle9} yields
\begin{align} \label{eq:saddle10}
\begin{split}
\partial_\omega g_\alpha (\bar{\sigma})
&= \Tr\left[ \chi_\alpha (\bar{\sigma})^\alpha \bar{\sigma}^\alpha  \partial_\omega \bar{\sigma}^{1-\alpha} \right]
= \Tr\left[ \chi_\alpha (\bar{\sigma})^\alpha \bar{\sigma}^\alpha ( 1-\alpha) \bar{\sigma}^{-\alpha} (\omega - \bar{\sigma})   \right] \\
&= (1-\alpha) \chi_\alpha (\bar{\sigma})^\alpha \Tr\left[ \omega - \bar{\sigma} \right] =
0.
\end{split}
\end{align}
By Brouwer's fixed-point theorem, the map in Eq.~\eqref{eq:opt2} is indeed the optimizer for Eq.~\eqref{eq:target}.	
Further, from  Eq.~\eqref{eq:saddle6}, it is clear that
\begin{align} \label{eq:saddle13}
\sigma^\star \gg W_x, \quad \forall x \in \texttt{supp}(P),
\end{align}
and thus item (iii) is proved.

Lastly, we show the uniqueness of the saddle-point.
Since  the map $\sigma\mapsto D_\alpha(W\|\sigma |P)$ is strictly concave \cite{Lie73}, the minimizer of Eq.~\eqref{eq:saddle1} is unique  for any $\alpha\in(0,1)$.	
Then, it remains to prove the uniqueness of the maximizer.
Let $\sigma^\star$ attain the minimum in Eq.~\eqref{eq:saddle1}.
By using the reparameterization again, we have
\begin{align}
K_{R,P}(s,\sigma^\star ) &= -sR + s   D_{\frac{1}{1+s}} \left( W \|\sigma^\star |P  \right) \\
\label{eq:saddle7}
&= -sR + s  \sum_{x\in\mathcal{X}} P(x) D_{\frac{1}{1+s}} \left( p_x \| q_x   \right),
\end{align}
where $p_x,q_x$ are the Nussbaum-Szko{\l}a distributions of $W_x$ and $\sigma^\star$.
The second-order partial derivative can be calculated as
\begin{align} \label{eq:saddle4}
\frac{\partial^2 K_{R,P}\left( s,\sigma^\star \right) }{ \partial  s^2 }  = -\frac{1}{(1+s)^3} \sum_{x\in\mathcal{X}} P(x)  \Var_{\hat{q}_{\frac{1}{1+s},x}} \left[ \log \frac{q_x}{p_x}  \right],
\end{align}
where
\begin{align}
\hat{q}_{t, x } (\omega) := \frac{ p_x(\omega)^{1-t}  q_x(\omega)^t }{\sum_{\omega \in \texttt{supp}(p_x) \cap \texttt{supp}(q_x)} p_x(\omega)^{1-t} q_x(\omega)^t  }, \quad \forall \omega \in \texttt{supp}(p_x) \cap \texttt{supp}(q_x), \; t\in[0,1].
\end{align}
Now, we assume the right-hand side of Eq.~\eqref{eq:saddle4} is zero, which is equivalent to
\begin{align} \label{eq:saddle11}
p_x (\omega) = c_x \cdot q_x (\omega), \quad \forall \omega \in \texttt{supp}(p_x) \cap \texttt{supp}(q_x)
\end{align}
for some constant $c_x>0$ and $x\in \texttt{supp}(P)$.
From Eq.~\eqref{eq:saddle13}, one finds $p_x \ll q_x$.
Summing the right-hand side of Eq.~\eqref{eq:saddle11} over $\omega  \in p_x^0$ yields
\begin{align} \label{eq:saddle12}
1 = c_x \cdot \Tr\left[ p_x^0 q_x \right], \quad \forall x\in \texttt{supp}(P).
\end{align}
By combining Eqs.~\eqref{eq:saddle11} and \eqref{eq:saddle12}, one can verify
\begin{align} \label{eq:saddle5}
\sup_{s\in\mathbb{R}_{> 0}} \left\{-sR+ s \sum_{x\in\mathcal{X}} P (x) D_{\frac{1}{1+s}}\left( p_x \|q_x \right) \right\} = \sup_{s\in\mathbb{R}_{> 0}} \left\{- s R -  s \sum_{x\in\mathcal{X}} P(x)\log \Tr\left[ p_x^0 q_x \right]   \right\} = 0,
\end{align}
where we rely the fact $R> R_\infty(W) \geq -\sum_{x\in\mathcal{X}} P(x) \log \Tr\left[ p_x^0 q_x \right]$ from Eq.~\eqref{eq:R_inf1}.
However, Eq.~\eqref{eq:saddle5} contradicts the assumption $P \in \mathscr{P}_R(\mathcal{X})$, which in turn implies that the right-hand side of Eq.~\eqref{eq:saddle4} is strictly negative.
Therefore, the map $s\to K_{R,P}(s,\sigma^\star)$ is strictly concave for all $s \in \mathbb{R}_{> 0}$ and thus the maximizer of Eq.~\eqref{eq:saddle1} is unique.
\qed

\subsection{Proof of Lemma \ref{lemm:hypothesis}} \label{proof:hypothesis}
Let $\mathbf{x}^n(m)$ be the codeword encoding the message $m \in \{1,\ldots,\exp\{nr\}\}$.
We define a binary hypothesis testing problem as:
\begin{align}
&\mathsf{H}_0: W_{\mathbf{x}^n (m)}^{\otimes n}; \\ 
&\mathsf{H}_1: \sigma^{n} := \bigotimes_{i=1}^n \sigma_i, 
\end{align}
where $\sigma^{ n} \in \mathcal{S}\left(\mathcal{H}^{\otimes n}\right)$ can be viewed as a dummy channel output.	
Since $\sum_{m=1}^M \beta\left( \Pi_{n,m}; \sigma^n \right) = 1$ for any POVM $\Pi_n = \{\Pi_{n,1},\ldots,\Pi_{n,\exp\{nr\}}  \}$, and $\beta\left( \Pi_{n,m} ; \sigma^{\otimes n} \right) \geq 0$ for every $m \in\mathcal{M}$, there must exist a message $m \in\mathcal{M}$ for any  code $\mathcal{C}_n$ such that $\beta\left( \Pi_{n,m} ; \sigma^{n} \right)\leq \exp\{-nr\} $. Let $\mathbf{x}^n := \mathbf{x}^n \left(m\right)$ be the codeword for that message $m$. Then 
\begin{align} \label{eq:sketch7}
\epsilon_{\max}\left(\mathcal{C}_n\right) \geq \epsilon_{m}\left( \mathcal{C}_n \right) = \alpha\left(\Pi_{n,m};  W_{\mathbf{x}^n}^{\otimes n}  \right)
\geq \widehat{\alpha}_{\exp\{-nr\}} \left( W_{\mathbf{x}^n}^{\otimes n} \| \sigma^{n} \right).
\end{align}
Since the above inequality \eqref{eq:sketch7} holds for every $\sigma^{n} \in \mathcal{S}\left(\mathcal{H}^{\otimes n}\right)$, it follows that
\begin{align}
\epsilon_{\max}\left(\mathcal{C}_n\right) \geq \max_{  \sigma \in \mathcal{S}(H) } \widehat{\alpha}_{\exp\{-nr\}} \left( W_{\mathbf{x}^n}^{\otimes n} \| \sigma^{\otimes n} \right).
\end{align}	
\qed

\subsection{Proof of Lemma \ref{lemm:regularity}} \label{proof:regularity}
This lemma closely follows from Altu\u{g} and Wagner's \cite[Lemma 9]{AW14}. However, the major difference is that we prove the claim using the expression $\phi_n$ as the error-exponent instead of the discrimination function: $\min\left\{\mathbb{D}\left(\tau\|\rho\right): \mathbb{D}\left(\tau\|\sigma\right) \leq R_n \right\}$. This expression is crucial to obtaining the sphere-packing bound in Theorem \ref{theo:refined} in the strong form of Gallager's expression

For convenience, we shorthand $r = R_n$.
From Lemma \ref{lemm:invariance}, it can be verified that
\begin{align}
E_0 (s) &:= -\frac{1+s}n \log \Tr \left[ \left( {p}^n\right)^{\frac1{1+s}} \left( {q}^n\right)^{\frac{s}{1+s}} \right] \\
&= - (1+s) {\Lambda}_{0,P_{\mathbf{x}^n}} \left( \frac{s}{1+s} \right), \label{eq:regularity8}
\end{align}
where Eq.~\eqref{eq:regularity8} follows from the definition of ${\Lambda}_{0,P_{\mathbf{x}^n}}$ in Eq.~\eqref{eq:FL0}.
Then, we rewrite the error-exponent function $\phi_n(r) $ by the Legendre-Fenchel transform of $E_0(s)$, i.e.,~
\begin{align}
{\phi}_n\left( r  \right) &=  \sup_{\alpha\in(0,1]} \left\{ \frac{\alpha-1}{\alpha} \left( r - \sum_{x\in\mathcal{X}} P_{\mathbf{x}^n}(x) D_\alpha\left(  {p}_x \|  {q}_x \right) \right) \right\}\\
&= \sup_{s\geq 0} \left\{ -sr + E_0(s)    \right\}. \label{eq:objective}
\end{align}

Direct calculation shows that
\begin{align} 
\frac{ \partial E_0(s) }{\partial s}
&= -{\Lambda}_{0,P_{\mathbf{x}^n}} \left( \frac{s}{1+s} \right) - \frac{1}{1+s} {\Lambda}'_{0,P_{\mathbf{x}^n}}  \left( \frac{s}{1+s} \right), \label{eq:regularity2} \\
\frac{ \partial^2 E_0(s) }{\partial s^2} &= -\frac{1}{(1+s)^3} {\Lambda}''_{0,P_{\mathbf{x}^n}} \left( \frac{s}{1+s} \right). \label{eq:regularity5}
\end{align}
Now assume the second-order derivative $\Lambda_{0,P_{\mathbf{x}^n}}''(t)$ in right-hand side of Eq.~\eqref{eq:regularity5} is zero for some $t\in[0,1]$. This is equivalent to
\begin{align} \label{eq:sharp33}
{p}_x (\omega) =  {q}_x (\omega) \cdot \mathrm{e}^{-\Lambda'_{0,x}(t)}, 
\quad \forall \omega \in p_x, 
\quad \forall x\in \texttt{supp}(P_{\mathbf{x}^n}).
\end{align}
Summing the right-hand side of Eq.~\eqref{eq:sharp33} over $\omega \in p_x$ gives
\begin{align} \label{eq:regularity123}
1 = \Tr\left[ p_x^0 q_x \right]  \mathrm{e}^{-\Lambda'_{0,x}(t)}.
\end{align}
Then, Eqs.~\eqref{eq:sharp33} and \eqref{eq:regularity123} imply that
\begin{align}
\phi_n ( r ) &= \sup_{0<\alpha\leq 1} \frac{\alpha-1}{\alpha} \left( r - \sum_{x\in\mathcal{X}} P_{\mathbf{x}^n} (x) D_\alpha\left( p_x \|q_x \right) \right) \\
&= \sup_{0<\alpha\leq 1} \frac{\alpha-1}{\alpha} \left(r+ \sum_{x\in\mathcal{X}} P_{\mathbf{x}^n} (x)\log \Tr\left[ p_x^0 q_x \right]    \right) = 0, \label{eq:sharp4}
\end{align}
where in Eq.~\eqref{eq:sharp4} we use the fact that $r > -\sum_{x\in\mathcal{X}} P_{\mathbf{x}^n} (x)\log \Tr\left[ p_x^0 q_x \right]$; see Eq.~\eqref{eq:R_inf1}.
However, from Lemma \ref{lemm:invariance} we know that $\phi_n(r) = E_\text{sp}(R) >0$, which leads to a contradiction.
Hence, we obtain
\begin{align} \label{eq:positive}
{\Lambda}''_{0,P_{\mathbf{x}^n}}(t) >  0, \quad \forall t\in[0,1],
\end{align}
and prove item (i).

From Eqs.~\eqref{eq:regularity5} and \eqref{eq:positive}, the objective function $-sr + E_0(s)$ in Eq.~\eqref{eq:objective} is strictly concave in $s$ for $s\in\mathbb{R}_{+}$. 
Further, by recalling that $\phi_n(r) = E_\text{sp}(R) >0$, $s=0$ will not be an optimum in Eq.~\eqref{eq:objective}.
We deduce that there exists a unique maximizer $s^\star \in \mathbb{R}_{> 0}$  such that
\begin{align}
r &= \left.\frac{ \partial E_0(s) }{\partial s}\right|_{s = s^\star}, \label{eq:regularity7} \\
{\phi}_n (r) &= E_0(s^\star) - s^\star \left.\frac{\partial E_0(s)}{\partial s}\right|_{s=s^\star}, \label{eq:regularity14}
\end{align}
if $r$ lies in the range:
\begin{align}
&-\frac1n\log \Tr\left[ (p^n)^0 q^n  \right] = \lim_{s\to+\infty} \frac{ \partial E_0(s) }{\partial s}
\leq 
r \leq \left. \frac{ \partial E_0(s) }{\partial s} \right|_{s=0} 
= \frac1n \mathbb{D}\left(  {p}^n\| {q}^n \right), \label{eq:regularity13}
\end{align}
where the boundary values $-\frac1n\log \Tr\left[ (p^n)^0 q^n  \right]$ and $\frac1n\mathbb{D}\left( {p}^n\| {q}^n\right)$ can be obtained from Eqs.~\eqref{eq:regularity2}, \eqref{eq:zero_derivative} and \eqref{eq:second_derivative}.
Substituting Eq.~\eqref{eq:regularity7} into \eqref{eq:regularity2} gives
\begin{align}
r = - {\Lambda}_{0,P_{\mathbf{x}^n}} \left( \frac{s^\star}{1+s^\star} \right) - \frac{1}{1+s^\star} {\Lambda}'_{0,P_{\mathbf{x}^n} }\left(  \frac{s^\star}{1+s^\star} \right). \label{eq:regularity10}
\end{align}
Further, Eqs.~\eqref{eq:objective}, \eqref{eq:regularity8}, \eqref{eq:regularity10} imply that
\begin{align}
{\phi}_n (r) &= - s^\star r + E_0(s^\star) \\
&= \frac{s^\star}{1+s^\star} {\Lambda}'_{0,P_{\mathbf{x}^n}}\left( \frac{s^\star}{1+s^\star} \right) - {\Lambda}_{0,P_{\mathbf{x}^n}} \left( \frac{s^\star}{1+s^\star} \right) . \label{eq:regularity9}
\end{align}
By comparing Eqs.~\eqref{eq:regularity10} and \eqref{eq:regularity9}, we obtain
\begin{align}
{\Lambda}'_{0,P_{\mathbf{x}^n}} \left( \frac{s^\star}{1+s^\star} \right) = {\phi}_n(r) - r \label{eq:regularity11}
\end{align}
which is exactly the optimum solution to the Fenchel-Legendre transform $\Lambda_{0,P_{\mathbf{x}^n} }^*(z)$ in Eq.~\eqref{eq:FL} with 
\begin{align}
t^\star &= \frac{s^\star}{1+s^\star}\in(0,1), \label{eq:opt_t}\\
z &=  {\phi}_n(r) - r.
\end{align}

From Eqs.~\eqref{eq:FL}, \eqref{eq:regularity11} and \eqref{eq:regularity9}, we conclude the item (i) of Lemma \ref{lemm:regularity}:
\begin{align}
\Lambda_{0,P_{\mathbf{x}^n} }^*\left(  {\phi}_n(r) - r \right)&= t^\star z - {\Lambda}_{0,P_{\mathbf{x}^n}}(t^\star) \\
&= \frac{s^\star}{1+s^\star}\left(  {\phi}_n(r) - r \right) - {\Lambda}_{0,P_{\mathbf{x}^n} }\left( \frac{s^\star}{1+s^\star} \right)\\
&= \frac{s^\star}{1+s^\star} {\Lambda}'_{0,P_{\mathbf{x}^n} } \left( \frac{s^\star}{1+s^\star} \right) - {\Lambda}_{0,P_{\mathbf{x}^n} }\left( \frac{s^\star}{1+s^\star} \right)\\
&=  {\phi}_n(r).
\end{align}

Item (ii) follows from item (i),
the symmetry $\Lambda_{0,x_i}(t) = \Lambda_{1,x_i} (1-t)$ and
$\Lambda_{0,x_i}'(t) = -\Lambda_{1,x_i}' (1-t)$,and Eq.~\eqref{eq:FL}.
$\Lambda_{1, P_{\mathbf{x}^n}  }^*\left( r - {\phi}(r)  \right) = r$.

For the item (iii), the positivity of ${\Lambda}''_{0,P_{\mathbf{x}^n} }(t)$, for $t\in[0,1]$, implies that the objective function $tz - \Lambda_{0,P_{\mathbf{x}^n} }(t)$ in Eq.~\eqref{eq:FL} is strictly concave in $t$ for $t\in [0,1]$.
Hence, by Eq.~\eqref{eq:opt_t}, the optimizer $t^\star \in(0,1)$ exists uniquely.
By recalling Eq.~\eqref{eq:regularity11}, we complete the claim in item (iii).
\qed



\end{document}